\providecommand{\R}{\mathbb{R}}
\providecommand{\stx}{\mathbf{x}}
\providecommand{\stu}{\mathbf{u}}
\providecommand{\eps}{\epsilon}
\DeclareMathOperator{\const}{const}
\DeclareMathOperator{\re}{Re}
\DeclareMathOperator{\Ma}{Ma}
\DeclareMathAlphabet{\mathcal}{OMS}{cmsy}{m}{n}
\DeclareSymbolFont{largesymbols}{OMX}{cmex}{m}{n}
\newtheorem{lemma}{Lemma}
\newtheorem{theorem}{Theorem}
\newtheorem{remark}{Remark}
\newtheorem{corollary}{Corollary}
\newtheorem{ann}{Assumption}
\begin{document}
\title{Separation of acoustic waves in isentropic flow perturbations}

\author{%
{\sc
Christian Henke\thanks{Email: christian.henke@atlas-elektronik.com}}
}
\date{\today}

\parindent 0em

\pagestyle{fancy}
\rhead{C. Henke: Separation of acoustic waves}
\lhead{}
\chead{}
\rfoot{}
\lfoot{}
\cfoot{}
\renewcommand{\headrulewidth}{0pt}

\maketitle

{\small
{\bf Abstract:}
The present contribution investigates the mechanisms of sound generation and propagation in the case of highly-unsteady flows. 
Based on the linearisation of the isentropic Navier-Stokes equation around a new pathline-averaged base flow, 
it is demonstrated for the first time that flow perturbations of a non-uniform flow can be split into acoustic and vorticity modes, with the acoustic modes being independent of the vorticity modes.
Therefore, we can propose this acoustic perturbation as a general definition of sound.

As a consequence of the splitting result, we conclude that the present acoustic perturbation is propagated by the convective wave equation and fulfils Lighthill's acoustic analogy. 
Moreover, we can define the deviations of the Navier-Stokes equation from the convective wave equation as ``true'' sound sources.

In contrast to other authors, no assumptions on a slowly varying or irrotational flow are necessary.

Using a symmetry argument for the conservation laws, an energy conservation result and a generalisation of the sound intensity are  provided.\\

{\bf 2010} {\it Mathematics Subject Classification:}
35C20, 
35L03, 
35L65, 
76Q05.\\ 

{\bf Keywords:}
Aero-acoustics, hydro-acoustics, convective wave equation, flow noise, Splitting Theorem, sound propagation, sources of sound. 
}

\section{Introduction}
Nowadays, several manufacturers have to pay increasingly attention to the noise of highly-unsteady flows. On the one hand, there are many applications where the background noise disturbs the desired acoustic signal. On the other hand, the acoustic comfort has become an important selection criteria for consumers.  
Therefore, it is surprising that the following significant question is open until now: 
\begin{quote}
Is it possible to derive a closed system of equations for the acoustic quantities in the case of highly-unsteady flows?
\end{quote}
Hence, the general mechanisms of sound generation and propagation are also unknown.
The qualitative definition of an acoustic perturbation 
is widely accepted being 
that 
part of fluctuation which is radiating with 
a velocity that depends on the speed of sound,
whereas non-acoustic perturbations are convected by the hydrodynamic flow.
To the best of the author's knowledge, there exists no decomposition of the highly-unsteady flow variables, where the acoustic quantity  with respect to this definition satisfies a closed system of equations.
The following remark captures this situation \cite[p. 283]{Goldstein_theoreticalbasisidentifying_2009}:
\begin{quote}
It is impossible to identify the ``sources'' of sound without first defining what is actually meant by ``sound''. Unfortunately, current understanding of unsteady compressible flows, especially turbulent shear flows, is still too rudimentary to give a completely general definition of this quantity.
\end{quote}
The aim of the paper is to give an answer to the open question raised in the opening paragraph. More precisely we present a mathematical analysis where the underlying decomposition of the flow variables is necessary for the key arguments. By virtue of the fact that the acoustic fluctuations satisfy a closed equation, this pressure fluctuation is convenient for a general definition of sound.

Moreover, the acoustic pressure fluctuation may be interpreted as the solution of Lighthill's analogy and of a convective wave equation with a true sound source.

As usual, the following steps will be considered:
\begin{enumerate}
\item splitting the variables 
of the compressible Navier-Stokes equations $\rho,u,p$ 
into their base flow $(\bar{\cdot})$ and their fluctuations $(\cdot')$;
\item linearising around the base flow;
\item neglecting the dissipation mechanisms for the fluctuations;
\item separating the acoustic fluctuations from the non-acoustic perturbations.
\end{enumerate}
Motivated by the large disparity between the energy of the
hydrodynamic and acoustic variables,  
an assumption like 
\begin{equation}
\frac{|\rho'|}{|\bar{\rho}|}= O(\eps) ,\quad 
\frac{\|u'\|}{\|\bar{u}\|}=O(\eps), \quad
\frac{|p'|}{|\bar{p}|}=O(\eps), \quad 0 <\eps \ll 1,
\label{eq:mean_flow_cond}
\end{equation}
ensures a small linearisation error.
Because of the conservation form of the Navier-Stokes equations one may expect a resulting linear conservation law after step three. 
However,  conservation laws for the fluctuation components are only known in special cases, e.g. in the case of uniform mean flows w.r.t. time averaging. 
Moreover, 
an exact separation of acoustic waves from other compressibility effects is also known exclusively in this special case (cf. \cite[Splitting Theorem, p. 220]{Goldstein_Aeroacoustics_1976}).
Therefore, the key motivation of the paper was the following question:
\begin{quote}
Can we define a base flow such that the equations for the perturbations can be formulated as a conservation law and what are the consequences for the separation of the acoustic waves?
\end{quote}
It turns out that it is convenient to work with  
a base flow which is constant along the pathlines 
of the fluid: 
\begin{equation*}
\partial_t \bar{p} + u \cdot \nabla \bar{p}=0, \quad
\partial_t \bar{u} + u \cdot \nabla \bar{u}=0.
\end{equation*}
Namely, sound waves are interpreted as vibrations around the fluid particle path.   
Because of $\nabla \bar{u}=0$ in the case of 
constant base flow, 
the associated momentum equation implies $\nabla \bar{p}=0.$ 
Obviously, $\partial_t\bar{u}=0$ and $ \partial_t \bar{p}=0$ 
and the proposed splitting method seems to be 
a natural generalisation of a time averaged splitting method in the case of uniform
mean flow.

As a consequence of the acoustic closure problem, 
there is no commonly accepted set of acoustic equations available.
Several non-radiating base flow have been suggested in the literature, 
e.g. incompressible flows by the viscous/acoustic splitting method in
\cite{Hardin.Pope_Acoustic/ViscousSplittingTechnique_1994},
\cite{Shen.Sorensen_Aeroacousticmodellingof_1999}, 
\cite{Seo.Moon_PerturbedCompressibleEquations_2005} and
\cite{Seo.Moon_Linearizedperturbedcompressible_2006} 
and time averaged flows by the linearised Euler equations (LEE) and modifications 
in \cite{Bogey.Bailly.ea_Computationofflownoise_2002} and
\cite{Ewert.Schroeder_Acousticperturbationequations_2003},
but fail at least in one of the following points:
\begin{itemize}
\item The perturbation variable is not based on an exact separation of acoustic waves / is not a well defined acoustic quantity.
\item There are unknown terms which have to be approximated.
\item The approach is based on impractical assumptions, e.g. slowly varying or irrotational flow.   
\end{itemize}

If an exact separation of acoustic waves is not possible for 
low Mach number flows, the perturbation variable contains 
different length scales which are in general
responsible for grid-dependent and unstable solutions of the 
perturbation equations (cf. 
\cite{Ewert.Schroeder_Acousticperturbationequations_2003} and 
\cite{Seo.Moon_Linearizedperturbedcompressible_2006}). 

Therefore, 
the major challenge in the weakening of the closure problem is 
to ensure stability properties. We mention the main stability arguments:
\begin{itemize}
\item The underlying differential operator yields a conservation of the perturbed energy (cf. \cite{Moehring_wellposedacoustic_1999},~\cite{Ewert.Schroeder_Acousticperturbationequations_2003}). This a priori energy result is necessary for the well-posedness in the sense of Hadamard.
\item 
Motivated by the special case of uniform mean flow, where the splitting theorems 
\cite[p. 220]{Goldstein_Aeroacoustics_1976} and
\cite[Appendix A]{Ewert.Schroeder_Acousticperturbationequations_2003}
claim that vorticity fluctuations are non-acoustic perturbations, stability could be improved by requiring 
(cf. \cite{Ewert.Schroeder_Acousticperturbationequations_2003},
\cite{Seo.Moon_Linearizedperturbedcompressible_2006})
\begin{equation*}
\partial_t \left( \nabla \times u' \right)=0,
\end{equation*}
and a suitable initial condition. In other words, this stability demand claims that the sound propagation of highly-unsteady flows is realised by longitudinal waves.

Obviously, the following questions are of special interest:

\begin{quote}
Are vorticity fluctuations and their derivatives always non-acoustic perturbations for more general flows?
\end{quote}
\begin{quote}
Are acoustic perturbations of highly-unsteady flows always occurring in longitudinal waves?
\end{quote}
\end{itemize}

A second unsatisfactory approach to determine the acoustic field is the use of the so called 
acoustic analogies. 
These methods are naturally exact but incomplete theories, i.e.
they consist of an exact equation for a pressure fluctuation variable, 
where the right hand side depends also on the solution.
Hence, in order to use the exact equation as an acoustic analogy the right-hand side has to be modelled as a known sound source.
In other words, the mechanisms responsible for sound propagation are replaced 
by equivalent sources. Consequently, it is impossible to separate 
(and therefore predict) the sound propagation mechanisms from the physical 
sources of sound. Thus, for example, the famous acoustic analogy of Lighthill
(cf. \cite{Lighthill_soundgeneratedaerodynamically._1952})
is not be able to determine refraction and convection effects.

The paper is organised as follows. Section 2 introduces 
the compressible Navier-Stokes equations and the associated equations of state.
After a short review of conservation laws and perturbation equations in the presence of uniform mean flow, Section 4 concerns the pathline-averaged base flow.
The following Section is devoted to a rigorous investigation of the propagation of sound and proves some of the key results which answer the questions raised earlier: a conservation law for the first order fluctuations of a highly-unsteady flow and the Splitting Theorem, that is a closed system of equations for the acoustic quantities.
The acoustic propagation operator is applied in Section 6 to identify the true sound source in Lighthill's right-hand side.
In Section 7 we discuss the structure of hyperbolic conservation laws. For this purpose we use a symmetry argument to prove an energy conservation result. Moreover, we propose a definition of sound intensity 
which reduces to the classical definition for a medium at rest. 

\section{Preliminaries}
In the following it is assumed that all functions are sufficiently smooth such that all further operations are valid.
Let us consider the compressible Navier-Stokes equations for a fixed time $\tau>0$
\begin{align}
\frac{D \rho}{Dt} +\rho \nabla \cdot u&=0 \quad \text{in } (0,\tau) \times \R^d, \label{eq:continuity} \\
\rho \frac{D u}{Dt} +\nabla p &=\nabla \cdot \sigma \quad \text{in }  (0,\tau) \times \R^d, \label{eq:momentum} \\
\rho T\frac{Ds}{Dt}&=-\nabla \cdot q + \sigma:\nabla u \quad \text{in } (0,\tau) \times \R^d, \label{eq:energy}\\ 
(\rho,u^\top,s)^\top&=g \text{ in }\left\{ 0 \right\} \times \R^d,
\label{eq:initial_cond}
\end{align}
where $g:\R^d \to \R \times \R^d \times \R$ is a known function and
$\rho,p,s: [0,\tau) \times \R^d \to \R,$ 
$u,q:[0,\tau) \times \R^d \to \R^d,$
$\sigma:[0,\tau) \times \R^d \to \R^{d,d}$
denote the density, pressure,
entropy, velocity, heat flux and the stress tensor, respectively.
Here and in what follows, the symbol $\frac{D}{Dt}=\partial_t + u \cdot \nabla$
denotes the material derivative.
Furthermore, $I \in \R^{d,d}$ denotes the identity matrix. 
As usual,
we use Fourier's law $q=\lambda \nabla T,$ 
and for a Newtonian medium Stokes' hypothesis
\begin{equation*}
\sigma=\mu\left( \nabla u +\left(\nabla u\right)^\top -\frac{2}{3} \nabla \cdot u I \right),
\end{equation*}
with $\lambda,\mu,T: [0,\tau) \times \R^d \to \R$ the heat conductivity, dynamic viscosity and the temperature.
In order to close the system $(\ref{eq:continuity})-(\ref{eq:energy})$, we consider two equations of state (cf. \cite[(1.5.14)]{Batchelor_IntroductiontoFluid_2002})
\begin{align*}
dv&=\alpha_p v \, dT -\beta_T v \, dp, \\
ds&=\frac{c_p}{T} \,dT - \alpha_p v \,dp, \\
\end{align*}
where 
\begin{equation*}
\alpha_p=\frac{1}{v}\left( \frac{\partial v}{\partial T} \right)_p,
\beta_T=-\frac{1}{v}\left( \frac{\partial v}{\partial p} \right)_T,
c_p=T\left( \frac{\partial s}{\partial T} \right)_p,
v=\frac{1}{\rho},
\end{equation*}
denotes the volume expansivity, isothermal compressibility, specific heat at constant pressure and the specific volume, respectively. 
If we can express a thermodynamic property as a function of two other variables of state, the fluid is called divariant.
Consequently, the thermodynamic state of a general divariant fluid can be determined by the three measurable quantities $\alpha_p, \beta_T$ and $c_p.$
Moreover, we can write equivalent equations of state under consideration of the speed of sound
$c=\left( \partial p/\partial \rho \right)_s^{1/2}.$
To do so, using
(cf. \cite{Chalot.Hughes.ea_Symmetrizationof_2002})
\begin{equation*}
c_v=c_p-\frac{\alpha_p^2 v T}{\beta_T}, 
\quad c^2=\frac{c_p}{c_v \rho \beta_T},
\end{equation*}
and interpreting $d\rho, dp$ and $ds$ along the pathlines,
we conclude
\begin{align*}
\frac{D\rho}{Dt}&=\frac{1}{c^2} \frac{Dp}{Dt} - \frac{\rho \alpha_p T}{c_p} \frac{Ds}{Dt},\\
\frac{DT}{Dt}&=\frac{T \alpha_p}{\rho c_p} \frac{Dp}{Dt}+ \frac{T}{c_p} \frac{Ds}{Dt}.
\end{align*}
In the following we restrict our attention to isentropic flows with $\frac{Dc}{Dt}=0.$ As a consequence, the energy equation $(\ref{eq:energy})$ reduces to $\frac{Ds}{Dt}=0.$ Thus we obtain the closed set of equations
\begin{align*}
\frac{D \rho}{Dt} +\rho \nabla \cdot u&=0 \quad \text{in } (0,\tau) \times \R^d, 
\\
\rho \frac{D u}{Dt} +\nabla p &=\nabla \cdot \sigma \quad \text{in }  (0,\tau) \times \R^d, 
\\
\frac{D\rho}{Dt}&=\frac{1}{c^2} \frac{Dp}{Dt} \quad \text{in }  (0,\tau) \times \R^d,
\\
(p,u^\top,\rho)^\top&=g \text{ in }\left\{ 0 \right\} \times \R^d.
\end{align*}
The dimensionless form of the isentropic Navier-Stokes equation is obtained  
by using the following non-dimensional variables.
For simplicity of notation, we replace 
\begin{equation*}
\begin{matrix}
\frac{t c_R}{x_R} &\to& t,\quad  & \frac{x}{x_R} &\to& x,\quad & \frac{\rho}{\rho_R} &\to& \rho, \\
\frac{u}{c_R} &\to& u,\quad  & \frac{p}{\rho_R c_R^2} &\to& p,\quad & \frac{\mu}{\mu_R} &\to& \mu, \\
\frac{c}{c_R} &\to& c,\quad  & \frac{x_R}{c_R} \partial_t &\to& \partial_t,\quad & x_R \nabla &\to& \nabla, \\
\end{matrix}
\end{equation*} 
where variables with index $R$ denote some reference values.
Consequently, it follows that 
\begin{align}
\frac{D \rho}{Dt} +\rho \nabla \cdot u&=0 \quad \text{in } (0,\tau) \times \R^d, \label{eq:continuity_isentrop} \\
\rho \frac{D u}{Dt} +\nabla p &=\frac{\Ma}{\re}\nabla \cdot \sigma \quad \text{in }  (0,\tau) \times \R^d, \label{eq:momentum_isentrop} \\
\frac{D\rho}{Dt}&=\frac{1}{c^2} \frac{Dp}{Dt} \quad \text{in }  (0,\tau) \times \R^d,\label{eq:state1_isentrop} \\
(p,u^\top,\rho)^\top&=g \text{ in }\left\{ 0 \right\} \times \R^d,
\label{eq:initial_cond_isentrop}
\end{align}
where $\Ma=\frac{u_R}{c_R}$ and $\re=\frac{\rho_R u_R x_R}{\mu_R}$ are the global Mach and Reynolds number, respectively.

\section{Conservation laws in the presence of uniform mean flow}
In this section we give a brief introduction on conservation laws which can
be considered as the most fundamental principles in physics.
Using standard arguments, it is shown that the fluctuations around an uniform mean flow can be formulated as a conservation law.

Let $U$ be an intrinsic quantity of a principle under consideration and $F_i(U),i=1,\dots,d$ a vector field such that the physical system satisfies (with the help of Einstein's summation convention) 
\begin{equation*}
\partial_t U + \partial_{x_i} F_i(U)=0. 
\end{equation*}
Integrating over an arbitrary but fixed domain $\Omega$ and applying the divergence Theorem, we see that the rate of change of the quantity $U$ inside $\Omega$ is equal to the flux across the boundary $\partial \Omega$
\begin{equation*}
\frac{d}{dt} \int_{\Omega} U \, dx=-\int_{\partial \Omega} F(U)\cdot n \, ds,
\end{equation*}
where $n$ denotes the unit outward normal vector. 
If the equations also include dissipation effects the conservations law is called balance law.
For instance, 
the compressible Navier-Stokes equations of the previous section 
are balance laws which 
involves acoustic effects as well as other compressible pressure fluctuations. 
Assuming the simplest case where the medium is at rest ($\bar{u}=0$) 
and carrying out the steps 1.-4. 
for the isentropic fluctuations 
$(p'=c^2 \rho')$
we satisfy
the conservation law by setting
\begin{equation*}
U
=
\begin{pmatrix}
\frac{p'}{\bar{\rho}c^2} \\
\bar{\rho} u_1' \\
\vdots \\
\bar{\rho} u_d' \\
\end{pmatrix}, \quad
F_1(U)
=
\begin{pmatrix}
\frac{U_2}{\bar{\rho}} \\
\bar{\rho} c^2 U_1\\
0\\
\vdots\\
0\\
\end{pmatrix}, \dots , \quad
F_{d}(U)
=
\begin{pmatrix}
\frac{U_{d+1}}{\bar{\rho}} \\
0\\
\vdots\\
0,\\
\bar{\rho} c^2 U_1
\end{pmatrix},
\end{equation*}
where $(\bar{\cdot})$ denotes the time mean value.
Next, we can eliminate the velocity fluctuations and
deduce 
the wave equation  
\begin{equation}
\partial_t^2 \left( \frac{p'}{c^2 \bar{\rho}} \right) - \nabla \cdot \left( \frac{1}{\bar{\rho}} \nabla p' \right)=0.
\label{eq:wave}
\end{equation}
Now, we consider a moving medium with mean flow $(\bar{u} = \const)$ 
together with 
a moving frame with coordinates $X \in \R^d$
and
a fixed reference frame with coordinates 
$x(t)=X+\bar{u} t.$ 
Since the medium is at rest in the moving frame, it follows that 
the wave equation 
\begin{equation}
\partial_t^2 \left( \frac{P'}{c^2 \bar{\rho}} \right) - \nabla_{X} \cdot \left( \frac{1}{\bar{\rho}} \nabla_{X} P' \right)=0
\label{eq:wave_moving}
\end{equation}
is still valid for $P'(t,X)=p'(t,x(t,X)).$
In order to perform the transformation of $(\ref{eq:wave_moving})$ into the fixed frame, we differentiate $F(t,X)=f(t,x(t,X)).$ That is, we have 
\begin{equation}
\partial_t F=\partial_t f+ \bar{u} \cdot \nabla f, \quad \partial_{X_i} F=\partial_{x_j}f \, \partial_{X_i} x_j=\partial_{x_i} f.
\label{eq:Lagrange_Euler}
\end{equation} 
Introducing the notation $\frac{\bar{D}}{Dt}=\partial_t + \bar{u} \cdot \nabla$
we can write the convective wave equation
\begin{equation}
\frac{\bar{D}^2}{Dt} \left(\frac{p'}{c^2 \bar{\rho}} \right)  - \nabla \cdot \left( \frac{1}{\bar{\rho}} \nabla p' \right) =0.
\label{eq:conv_wave}
\end{equation}
Since $\partial_{X_i} x_j$ is not a constant diagonal matrix in general, 
$(\ref{eq:Lagrange_Euler})$ shows that one cannot generalise this strategy. 

On the other hand,
$(\ref{eq:conv_wave})$
follows from a conservation law with
\begin{equation*}
F_1(U)
=
\begin{pmatrix}
\bar{u}_1 U_1& + &\frac{U_2}{\bar{\rho}} \\
\bar{u}_1 U_2& + &\bar{\rho} c^2 U_1\\
\bar{u}_1 U_3 & &\\
\vdots\\
\bar{u}_1 U_{d+1} & &\\
\end{pmatrix}, \dots , \quad
F_{d}(U)
=
\begin{pmatrix}
\bar{u}_d U_1 &+&\frac{U_{d+1}}{\bar{\rho}} \\
\bar{u}_d U_2 & & \\
\vdots\\
\bar{u}_d U_d & & \\
\bar{u}_d U_{d+1} &+& \bar{\rho} c^2 U_1
\end{pmatrix},
\end{equation*}

which we can derive 
directly from the compressible Navier-Stokes equations, if the mass conservation for the mean flow $\bar{u} \cdot \nabla \bar{\rho}=- \bar{\rho} \nabla \cdot \bar{u}=0$ is considered.
Notice, that for $\bar{\rho}=\const$ plane waves satisfy the impedance relation $p'/u'\cdot n=\bar{\rho}c.$  
Hence, together with $p'=c^2 \rho'$ we have
\begin{equation*}
\frac{|\rho'|}{|\bar{\rho}|}=\frac{|u' \cdot n|}{c} \le \frac{\|u'\|}{c}=\Ma'.
\end{equation*}
Thus, for general solutions of the wave equation the fluctuation mach number $\Ma'$ is a measure for the error in the linearisation step. 
Let us finally point out, that the differential operators of the conservation laws 
$(\ref{eq:wave})$ and $(\ref{eq:conv_wave})$
are also the differential operators of the analogies of Lighthill \cite{Lighthill_soundgeneratedaerodynamically._1952} and 
Möhring \cite{Moehring_wellposedacoustic_1999}.

\section{Definition of the base flow}
In order to consider the base flow, 
we recall some properties of time averaging. Let $f:[0,\tau) \to \R$ be a given function. Then, the time average of $f$ may be interpreted as the solution of the following ordinary differential equation
\begin{equation}
\begin{split}
d_t f_0 &= 0, \quad \text{in } (0,\tau),\\
f_0 &= \frac{1}{\tau} \int_{0}^{\tau} f(t) \, dt, \quad \text{in } \left\{ 0 \right\}.
\end{split}
\label{eq:time_av_base_flow}
\end{equation}
Moreover, 
consider
\begin{equation}
\begin{split}
d_t g &=\frac{f}{\tau}, \quad \text{in } (0,\tau),\\
g&=0, \quad \text{in } \left\{ 0 \right\}, 
\end{split}
\label{eq:time_av_integral}
\end{equation}
 and applying the fundamental Theorem of calculus 
the above integral can be 
written as $g(\tau).$
In the following we introduce the pathline-averaged base flow by time-averaging of a function which characterises a fluid particle $X.$ 
To do so, let $\stx=(x_0,x^\top)^\top$
denote a point in the time-space domain 
at position $x=(x_1,x_2,\dots,x_d)^\top$ 
and time $x_0=t.$ 
Now, for every $\stx$ and every velocity field 
$\stu=(1,u^\top)^\top:[0,\tau) \times \R^d \to \R^{d+1}$
one 
can find a fluid particle such that the movement of the particle 
$\stx_X(t)=(t,x(t)^\top)^\top$ defined by
\begin{equation}
\begin{split}
\dot{\stx}_X(t)&=\stu(\stx_X(t)),\\
\stx_X(0)&=(0,X^\top)^\top,
\label{def:pathline}
\end{split}
\end{equation}
satisfies $\stx=\stx_X(t).$
It remains to consider functions of the form $f,g: [0,t) \times \R^d \to \R.$
From the Lagrangian point of view we have to change $f_0 \to f_0 \circ \stx_X$ and $g \to g \circ \stx_X$ in $(\ref{eq:time_av_base_flow})$ and $(\ref{eq:time_av_integral})$ resp.
\begin{equation}
\begin{split}
d_t (g \circ \stx_X)\underset{(\ref{def:pathline})}{=}\partial_t g + u \cdot \nabla g  &= \frac{f}{\tau}, \quad \text{in } (0,\tau) \times \R^d,\\
g \circ \stx_X&=0, \quad \text{in } \left\{ 0 \right\} \times \R^d,
\end{split}
\label{eq:path_av_integral}
\end{equation}
and
\begin{equation}
\begin{split}
d_t (f_0 \circ \stx_X) \underset{(\ref{def:pathline})}{=} 
\partial_t f_0 + u \cdot \nabla f_0 &= 0, \quad \text{in } (0,\tau) \times \R^d,\\
f_0 \circ \stx_X &=  (g \circ \stx_X)(\tau)
, \quad \text{in } \left\{ 0 \right\} \times \R^d.
\end{split}
\label{eq:base_flow}
\end{equation}
In other words, we have defined the base flow as
\begin{equation}
f_0(t,x)=\left\{ \frac{1}{\tau} \int_{0}^{\tau} f(\stx_X(\xi)) \, d\xi: (t,x) =\stx_X(t) \right\}, 
\label{def:base_flow}
\end{equation}
which we identify in the following also as $\bar{f}(t,x).$

Note that, the implementation of the initial condition 
of $(\ref{eq:base_flow})$
requires the knowledge of
$g(\stx_X(\tau))$ at time $0$
and the knowledge of
$x(\tau),$ resp. where only $x(0)$ is available.
In order to avoid this problem, we can interpret $(\ref{eq:base_flow})$ as an ``end value problem''
\begin{equation}
\begin{split}
\partial_t f_0 + u \cdot \nabla f_0 &= 0, \quad \text{in } (0,\tau) \times \R^d,\\
f_0 \circ \stx_X &=  (g \circ \stx_X)(\tau)
, \quad \text{in } \left\{ \tau \right\} \times \R^d,
\end{split}
\label{eq:base_flow2}
\end{equation}
which is equivalent to the following backward problem for the time variable $t^-=\tau -t$ 
\begin{equation}
\begin{split}
\partial_{t^-} f_0 -u \cdot  \nabla f_0 &= 0, \quad \text{in } (t_0^-,t_0^- +\tau) \times \R^d,\\
f_0 &= g(t_0^-,x) , \quad \text{in } \left\{ t_0^- \right\} \times \R^d.
\end{split}
\label{eq:base_flow_backward}
\end{equation}

Summarising the above considerations, we have to solve $(\ref{eq:path_av_integral})$ and $(\ref{eq:base_flow_backward})$ to determine the base flow form the given flow variable $f$.

\section{Propagation of sound}
In this section we 
investigate the propagation of sound for highly-unsteady flows. 
Namely, a system of perturbation equations in conservation form and, as a consequence, a closed system of equations for the acoustic quantities are derived.
These results are consequences 
of an asymptotic analysis of the 
base flow and the isentropic Navier-Stokes equations followed by a projection onto the acoustic perturbations. 

The perturbations are the first order fluctuations around
the pathline-averaged base flow $\bar{f}(\stx_X(t)),$
which satisfies $(\ref{eq:mean_flow_cond}).$
We start
with an uniform asymptotic expansion (cf. \cite[p. 20]{Johnson_SingularPerturbationTheory_2005}) 

\begin{equation}
\begin{split}
f(\stx_X(t);\eps)
&= f_0(\stx_X(t)) + f_1(\stx_X(t))\eps + O(\eps^2)\\
&= \bar{f}(\stx_X(t)) + f'(\stx_X(t)) + O(\eps^2),
\end{split}
\label{def:fluctuating}
\end{equation} 
where $\bar{f}=f_0$ and $f_1$ are independent of $\eps.$
One important property of asymptotic expansions is that there is no need for a convergent-series representation of $f.$
Therefore, the following is assumed:
\begin{ann}
The solutions $\rho, u$ and $p$ of $(\ref{eq:continuity_isentrop})-(\ref{eq:initial_cond_isentrop})$ can be decomposed by an uniform asymptotic expansion around the pathline-averaged base flow
\begin{equation}
\begin{split}
\rho(t,x;\eps)&=\overline{\rho}(t,x)+\rho'(t,x) + O(\eps^2),\\
u(t,x;\eps)&=\overline{u}(t,x)+u'(t,x) + O(\eps^2), \\
p(t,x;\eps)&=\overline{p}(t,x)+p'(t,x) + O(\eps^2), \quad 0 < \eps \ll 1.
\end{split}
\label{eq:ass}
\end{equation}
\label{ass:one}
\end{ann}
Then, by 
substituting the decompositions $(\ref{eq:ass})$
in $(\ref{eq:continuity_isentrop})-(\ref{eq:initial_cond_isentrop}),$
we can apply 
a well known result from the asymptotic analysis which leads to a hierarchy of equations for the terms multiplied by the same power of $\eps:$ 
\begin{lemma}
Let $L_i,\, i=0,\cdots,N$ be arbitrary terms which are independent of $\eps.$ Then the statement
\begin{equation}
\sum_{i=0}^N L_i \eps^{i}= O(\eps^{N+1}), \, \eps \to 0,
\label{eq:asymp_key_result}
\end{equation}
holds if and only if $L_i=0,\, i=0,\cdots,N.$
\label{lem:asymp_key_result}
\end{lemma}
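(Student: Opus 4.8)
The \emph{if} direction is immediate: if every $L_i$ vanishes then $\sum_{i=0}^N L_i \eps^i$ is identically zero, which is trivially $O(\eps^{N+1})$. All of the content lies in the \emph{only if} direction, and the plan is to prove it by induction on $N$, exploiting crucially that each $L_i$ is independent of $\eps$, so that the limit $\eps \to 0$ isolates the coefficients one at a time.

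For the base case $N=0$ the hypothesis reads $L_0 = O(\eps)$, i.e. there exist constants $C>0$ and $\eps_0>0$ with $|L_0| \le C\eps$ for all $0<\eps<\eps_0$; letting $\eps \to 0^+$ forces $L_0 = 0$. For the inductive step, assume the claim for $N-1$ and suppose $\sum_{i=0}^N L_i \eps^i = O(\eps^{N+1})$. Since $N+1 \ge 1$ we have in particular $\sum_{i=0}^N L_i \eps^i = O(\eps)$, and the same limiting argument as in the base case (every summand with $i\ge 1$ tends to $0$) yields $L_0 = 0$. Subtracting the now-vanishing constant term and dividing by $\eps$ gives $\sum_{i=1}^N L_i \eps^{i-1} = O(\eps^{N})$, and after the reindexing $j = i-1$ this is a sum of length $N$ with remainder $O(\eps^{N})$, to which the induction hypothesis applies and yields $L_i = 0$ for $i=1,\dots,N$. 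Together with $L_0=0$ this completes the step.

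The only point requiring care — and the step I would treat as the main obstacle — is the bookkeeping of the $O$-symbol under division by $\eps$: one must unwind the definition of $O(\eps^{N+1})$ into an explicit bound $C\eps^{N+1}$ valid on a one-sided neighbourhood of $0$, so that dividing by $\eps$ legitimately produces the bound $C\eps^{N}$ and lowers the order of the remainder by exactly one. It is precisely here that the hypothesis \enquote{$L_i$ independent of $\eps$} is indispensable: were the coefficients themselves allowed to depend on $\eps$, the pointwise limit $\eps \to 0$ would no longer extract a fixed number and the degree-reduction argument would collapse. In the application each $L_i$ is a function of $(t,x)$ only, so the conclusion is obtained by applying the scalar argument pointwise at every fixed space-time point.
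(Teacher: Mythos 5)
Your proof is correct. Note, however, that the paper does not actually prove this lemma at all: its entire ``proof'' is a citation of \cite[Lemma 3.1]{Meister_AsymptoticSingleand_1999}, so there is no in-paper argument to compare against. Your self-contained induction is the standard way to establish the statement and is sound in every step: the base case extracts $L_0=0$ from $|L_0|\le C\eps$ by letting $\eps\to 0^+$, and the inductive step correctly reduces the order of the remainder by unwinding the $O$-bound into an explicit inequality $\bigl|\sum_{i=0}^N L_i\eps^i\bigr|\le C\eps^{N+1}$ on $(0,\eps_0)$ and dividing by $\eps$, which is legitimate precisely because the bound is pointwise on a one-sided neighbourhood of $0$. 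Your closing remarks are also the right ones to make: the independence of the $L_i$ from $\eps$ is exactly what makes the limiting/degree-reduction argument work (without it the statement is false, e.g.\ $L_0=\eps^{N+1}$), and in the paper's application the $L_i$ are functions of $(t,x)$, so the scalar argument must indeed be applied pointwise, as you observe. The only thing your write-up adds beyond the paper is the proof itself, which is a net gain in self-containedness; the only thing it loses is the attribution to Meister.
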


\begin{proof}
\cite[Lemma 3.1]{Meister_AsymptoticSingleand_1999}.
\end{proof}

Hence, from $\frac{Dc}{Dt}=0$ it follows that 
$(\ref{eq:state1_isentrop})$ 
can be solved by setting
\begin{equation}
\rho_i=p_i/c^2, \quad i=0,\cdots,N.
\label{eq:state1_coeff}
\end{equation}
Moreover, $(\ref{eq:continuity_isentrop})$ and $(\ref{eq:momentum_isentrop})$ reduce to
\begin{align}
\frac{D \rho'}{Dt} 
+\rho'  \nabla \cdot \bar{u} 
+\bar{\rho}  \nabla \cdot u' 
+\rho'   \nabla \cdot u' 
&=- \bar{\rho} \nabla \cdot \bar{u} +O(\eps^2),
\label{eq:perturbed_continuity}\\
\left(\overline{\rho}+\rho'\right) \frac{D u'}{Dt} 
+\nabla p' 
- \frac{\Ma}{\re}\nabla \cdot \sigma' 
&=-\nabla \bar{p} + \frac{\Ma}{\re}\nabla \cdot \bar{\sigma} +O(\eps^2). 
\label{eq:perturbed_momentum} 
\end{align}
Now, $(\ref{eq:state1_coeff})$ immediately implies
\begin{align}
\frac{D}{Dt} \frac{p'}{c^2} 
+\bar{\rho}  \nabla \cdot u' 
&=- \bar{\rho} \nabla \cdot \bar{u}
-\frac{p'}{c^2} \nabla \cdot \bar{u} 
-\frac{p'}{c^2} \nabla \cdot u' 
+O(\eps^2),
\label{eq:pfluc} \\
\overline{\rho}\frac{D u'}{Dt} 
+\nabla p' 
- \frac{\Ma}{\re}\nabla \cdot \sigma' 
&=-\nabla \bar{p} + \frac{\Ma}{\re}\nabla \cdot \bar{\sigma}
- \frac{p'}{c^2} \frac{Du'}{Dt} 
+O(\eps^2),
\label{eq:ufluc} 
\end{align}
resp.
\begin{align}
\partial_t \left( \frac{p'}{\bar{\rho}c^2} \right) 
+\nabla \cdot \left( \frac{p'}{\bar{\rho}c^2} u +u' \right) 
&=- \nabla \cdot \bar{u} +O(\eps^2),
\label{eq:pfluc_conserv} \\
\partial_t\left( \bar{\rho} u' \right) 
+ \nabla \cdot \left( \bar{\rho} u' \otimes u +p'I- \frac{\Ma}{\re}\sigma'\right) 
&=- \nabla \bar{p} + \frac{\Ma}{\re}\nabla \cdot \bar{\sigma}
-\frac{p'}{c^2} \frac{Du'}{Dt} 
+\bar{\rho} u' \nabla \cdot u 
+O(\eps^2),
\label{eq:ufluc_conserv} 
\end{align}
where we have used the notation $\nabla \cdot (a \otimes b)=\partial_i (a_j b_i),$ $i,j=1,\cdots,d.$

\begin{lemma}
Let Assumption $\ref{ass:one}$ be satisfied, 
$\frac{Dc}{Dt} =0$ and  
$\frac{\Ma}{\re} = O(\eps^k),\, k \ge 1.$
Then, for the base  
components of $(\ref{eq:ass})$
it holds that 
\begin{enumerate}
\item[(i)]
\begin{equation}
\nabla \cdot \bar{u}=0,\quad \bar{p}=\text{constant},
\label{eq:ns_asymp_first_order}
\end{equation}
\item[(ii)] 
\begin{equation}
\frac{\bar{D}}{Dt} \bar{u}=0, \quad u' \cdot \nabla \bar{u}=0,
\label{eq:pathline_asymp}
\end{equation}
\item[(iii)]
\begin{equation}
\nabla \bar{u}:\nabla \bar{u}=\partial_i \bar{u}_j \partial_j \bar{u}_i=0.
\label{eq:pathline_asymp2}
\end{equation}
\end{enumerate}
\label{lem:baseflow}
\end{lemma}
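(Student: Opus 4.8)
The plan is to substitute the asymptotic expansions (\ref{eq:ass}) into the isentropic equations (\ref{eq:continuity_isentrop})--(\ref{eq:momentum_isentrop}), exploit that every base component is constant along pathlines, and then read off each assertion by matching powers of $\eps$ through Lemma \ref{lem:asymp_key_result}. The structural fact I would lean on throughout, available from Section 4, is the pathline constraint $\frac{D\bar{f}}{Dt}=0$; in particular $\frac{D\bar{\rho}}{Dt}=\frac{D\bar{p}}{Dt}=0$ and $\frac{D\bar{u}}{Dt}=0$ componentwise, so that the material derivatives of the $O(1)$ terms drop out and $\frac{Du}{Dt}=\frac{Du'}{Dt}+O(\eps^2)$.

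For (i) I would use the already-substituted forms (\ref{eq:perturbed_continuity}) and (\ref{eq:perturbed_momentum}). Collecting the $O(1)$ contributions in the continuity relation leaves only $\bar{\rho}\,\nabla\cdot\bar{u}$, and since $\bar{\rho}>0$ (a pathline average of a positive density) Lemma \ref{lem:asymp_key_result} forces $\nabla\cdot\bar{u}=0$. In the momentum relation the viscous term is $O(\eps^k)$ with $k\ge1$ and the inertial term $(\bar{\rho}+\rho')\frac{Du'}{Dt}$ is $O(\eps)$, so the unique $O(1)$ term is $\nabla\bar{p}$; hence $\nabla\bar{p}=0$, and combining this with $\frac{D\bar{p}}{Dt}=\partial_t\bar{p}+u\cdot\nabla\bar{p}=0$ yields $\partial_t\bar{p}=0$ as well, so that $\bar{p}$ is constant.

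For (ii) I would insert $u=\bar{u}+u'+O(\eps^2)$ directly into the pathline identity $\frac{D\bar{u}}{Dt}=\partial_t\bar{u}+u\cdot\nabla\bar{u}=0$, which splits as $\frac{\bar{D}}{Dt}\bar{u}+u'\cdot\nabla\bar{u}+O(\eps^2)=0$ with $\frac{\bar{D}}{Dt}=\partial_t+\bar{u}\cdot\nabla$. The first term is $O(1)$ and the second is $O(\eps)$, so Lemma \ref{lem:asymp_key_result} delivers both $\frac{\bar{D}}{Dt}\bar{u}=0$ and $u'\cdot\nabla\bar{u}=0$ at once.

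Finally, (iii) is the only genuinely nontrivial step, and I would obtain it by taking the divergence of the first identity in (ii). In components this reads $\partial_j(\partial_t\bar{u}_j+\bar{u}_i\partial_i\bar{u}_j)=0$; the time term becomes $\partial_t(\nabla\cdot\bar{u})=0$ by (i), while the convective term expands as $\partial_j\bar{u}_i\,\partial_i\bar{u}_j+\bar{u}_i\,\partial_i(\nabla\cdot\bar{u})$, whose last factor again vanishes by (i). What survives is precisely $\partial_i\bar{u}_j\,\partial_j\bar{u}_i=\nabla\bar{u}:\nabla\bar{u}=0$. The main obstacle is therefore not the order-matching, which is routine once the pathline constraint is used, but recognising that the quadratic contraction of (iii) is exactly the residue of the divergence of the self-advection term and is annihilated by the incompressibility $\nabla\cdot\bar{u}=0$ from (i); one must also be careful with the $\eps$-bookkeeping inside $\frac{D}{Dt}$, whose $O(1)$ part $\partial_t+\bar{u}\cdot\nabla$ acts on fluctuations without lowering their order while the $u'\cdot\nabla$ part raises it.
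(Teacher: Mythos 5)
Your proof is correct and takes essentially the same route as the paper: leading-order matching (via Lemma \ref{lem:asymp_key_result}) in the perturbed continuity and momentum equations together with the pathline-average identities $\frac{D\bar{p}}{Dt}=0$ and $\frac{D\bar{u}}{Dt}=0$ for parts (i) and (ii), and for (iii) the divergence of $\frac{\bar{D}}{Dt}\bar{u}=0$ combined with $\nabla\cdot\bar{u}=0$, which is exactly the paper's identity $\nabla\cdot\frac{\bar{D}}{Dt}\bar{u}=\frac{\bar{D}}{Dt}\nabla\cdot\bar{u}+\nabla\bar{u}:\nabla\bar{u}$ written out in components. The only cosmetic difference is that you argue from the non-divergence forms $(\ref{eq:perturbed_continuity})$--$(\ref{eq:perturbed_momentum})$ while the paper cites the equivalent conservation forms.
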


\begin{proof}
In the following we are 
using Lemma \ref{lem:asymp_key_result} again.
Due to the fact that $\rho_i,u_i,p_i,\, i=0,1$ are independent of $\eps,$  
the first assertion follows by the  
consideration of leading order terms 
in $(\ref{eq:pfluc_conserv})$ and $(\ref{eq:ufluc_conserv})$ 
and the definition of the pathline-averaged pressure. 
Next, the definition of the pathline-averaged velocity implies the second statement. 
Finally, the identity
\begin{equation*}
\nabla \cdot \frac{\bar{D}}{Dt}{\bar{u}}=\frac{\bar{D}}{Dt} \nabla \cdot \bar{u}+ \nabla \bar{u}:\nabla \bar{u}
\end{equation*} 
completes the proof.
\end{proof}
\begin{lemma}
Let Assumption $\ref{ass:one}$ be satisfied, 
$c=\text{constant}$ and  
$\frac{\Ma}{\re} = O(\eps^k),\, k \ge 1.$
Then, for the base  
components of $(\ref{eq:ass})$
it holds that 
\begin{equation}
\bar{\rho}=\frac{\bar{p}}{c^2}=\text{constant}.
\label{eq:rho_bar}
\end{equation}
\label{lem:baseflow2}
\end{lemma}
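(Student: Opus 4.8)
The plan is to obtain the asserted identity directly from the leading-order state relation and then read off constancy from the already-established structure of the base flow. First I would note that the present hypothesis $c=\text{constant}$ is strictly stronger than the condition $\frac{Dc}{Dt}=0$ used in Lemma~\ref{lem:baseflow}; consequently every conclusion of that lemma remains available. In particular, the first assertion $(\ref{eq:ns_asymp_first_order})$ gives $\bar{p}=\text{constant}$, and a globally constant $c$ trivially satisfies $\frac{Dc}{Dt}=0$, so that the coefficient relation $(\ref{eq:state1_coeff})$ continues to hold.

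Next I would extract the leading-order ($i=0$) instance of $(\ref{eq:state1_coeff})$, namely $\rho_0=p_0/c^2$. By the uniform asymptotic expansion $(\ref{eq:ass})$, the zeroth-order coefficients are exactly the base-flow quantities, $\bar{\rho}=\rho_0$ and $\bar{p}=p_0$; substituting this identification yields the equality $\bar{\rho}=\bar{p}/c^2$. This is the nontrivial half of the statement, and it rests entirely on the isentropic state equation $(\ref{eq:state1_isentrop})$ having been solved order by order through Lemma~\ref{lem:asymp_key_result}.

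It then remains only to combine constancy of the two factors: $\bar{p}$ is constant by $(\ref{eq:ns_asymp_first_order})$ and $c$ is constant by hypothesis, whence $\bar{\rho}=\bar{p}/c^2=\text{constant}$. I do not expect any genuine obstacle here; the single point that deserves care is the justification that $(\ref{eq:state1_coeff})$ may legitimately be invoked at leading order, which is precisely where the hypothesis $c=\text{constant}$ (via $\frac{Dc}{Dt}=0$) and the asymptotic matching of Lemma~\ref{lem:asymp_key_result} enter.
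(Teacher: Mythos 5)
Your argument is correct and is essentially the paper's own proof, which simply cites the coefficient relation $(\ref{eq:state1_coeff})$ together with Lemma~\ref{lem:baseflow}; you have just spelled out the details (the $i=0$ instance $\rho_0=p_0/c^2$, the identification of zeroth-order coefficients with the base flow, and the observation that $c=\text{constant}$ implies $\frac{Dc}{Dt}=0$ so the earlier lemma applies). No further changes are needed.
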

\begin{proof}
The desired result follows by using $(\ref{eq:state1_coeff})$ and Lemma $\ref{lem:baseflow}.$
\end{proof}
Note that for the aforementioned Lemmas 
no assumption on a slowly varying flow is necessary.
Furthermore, the approach under consideration can be interpreted as a perturbation ansatz about an incompressible base flow.

Now, we obtain a conservation law for the perturbations around the base flow.  

\begin{theorem}
Let Assumption $\ref{ass:one},$ 
$\frac{Dc}{Dt} =0$ and  
$\frac{\Ma}{\re} = O(\eps^k),\, k \ge 2$ be satisfied
and let $g:\R^d \to \R \times \R^d$ be a known function.
Then, the first order fluctuation 
components of $(\ref{eq:ass})$
are solutions of 
\begin{align}
\partial_t \left( \frac{p'}{\bar{\rho}c^2} \right) 
+\nabla \cdot \left( \frac{p'}{\bar{\rho}c^2} \bar{u} +u' \right)
&=0 
\quad \text{in } (0,\tau) \times \R^d,
\label{eq:pfluc_conserv_neg2} \\
\partial_t\left( \bar{\rho} u' \right)+ \nabla \cdot \left( \bar{\rho} u' \otimes \bar{u} +p'I\right)
&= 0
\quad \text{in } (0,\tau) \times \R^d, 
\label{eq:ufluc_conserv_neg2} \\
(p',u'^\top)^\top&=g \text{ in }\left\{ 0 \right\} \times \R^d.
\label{eq:ini_fluc_conserv_neg2} 
\end{align}
\label{th:conserv_law}
\end{theorem}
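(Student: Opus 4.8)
The plan is to start not from the raw isentropic Navier--Stokes system but from the two perturbation equations already cast in conservation form, $(\ref{eq:pfluc_conserv})$ and $(\ref{eq:ufluc_conserv})$, and to reduce their right-hand sides and flux terms to $O(\eps^2)$ using Lemma $\ref{lem:baseflow}$ together with the strengthened viscosity hypothesis $\frac{\Ma}{\re}=O(\eps^k),\,k\ge 2$. Throughout I would exploit that each fluctuation $p',u'$ is $O(\eps)$, so that any product of two fluctuations, or any single fluctuation carrying a prefactor of order $\eps$, is $O(\eps^2)$ and may be absorbed into the remainder.

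For the pressure equation I would begin with $(\ref{eq:pfluc_conserv})$. By Lemma $\ref{lem:baseflow}$ the source $-\nabla\cdot\bar u$ vanishes, leaving only the $O(\eps^2)$ remainder. Inserting $u=\bar u+u'+O(\eps^2)$ into the flux gives $\frac{p'}{\bar\rho c^2}u=\frac{p'}{\bar\rho c^2}\bar u+\frac{p'}{\bar\rho c^2}u'+O(\eps^2)$, and the middle term is $O(\eps^2)$ as a product of two first-order quantities. Discarding the remainder yields $(\ref{eq:pfluc_conserv_neg2})$.

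For the momentum equation I would treat $(\ref{eq:ufluc_conserv})$ term by term. On the right-hand side, $\nabla\bar p=0$ by Lemma $\ref{lem:baseflow}$; the viscous term $\frac{\Ma}{\re}\nabla\cdot\bar\sigma$ is $O(\eps^2)$ precisely because $k\ge 2$ and $\bar\sigma=O(1)$; the term $\frac{p'}{c^2}\frac{Du'}{Dt}$ is $O(\eps^2)$, since $\frac{Du'}{Dt}=O(\eps)$; and $\bar\rho u'\,\nabla\cdot u=\bar\rho u'\,\nabla\cdot u'+O(\eps^2)=O(\eps^2)$ after using $\nabla\cdot\bar u=0$. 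On the flux side, $\frac{\Ma}{\re}\sigma'=O(\eps^{3})$ is dropped, while $\bar\rho u'\otimes u=\bar\rho u'\otimes\bar u+\bar\rho u'\otimes u'+O(\eps^2)$, the quadratic term being negligible. Collecting the survivors gives $(\ref{eq:ufluc_conserv_neg2})$, and the initial condition $(\ref{eq:ini_fluc_conserv_neg2})$ follows by matching the first-order terms of $(\ref{eq:initial_cond_isentrop})$ under $(\ref{eq:ass})$, the $\rho'$-component being eliminated through $\rho'=p'/c^2$.

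I expect the only real subtlety to be the order bookkeeping, and in particular the role of $k\ge 2$ rather than the weaker $k\ge 1$ used in Lemma $\ref{lem:baseflow}$. The decisive term is $\frac{\Ma}{\re}\nabla\cdot\bar\sigma$: under $k\ge 1$ it would be $O(\eps)$ and would survive at first order, spoiling the clean conservation form, whereas $k\ge 2$ forces it, and every other dissipative contribution, into the $O(\eps^2)$ remainder. The rest of the argument is routine verification that each discarded term either is a product of at least two $O(\eps)$ factors or carries a prefactor of order $\eps^2$.
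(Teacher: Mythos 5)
Your proposal is correct and follows essentially the same route as the paper, whose proof is a one-line appeal to Lemma \ref{lem:baseflow} and the first-order equations; you simply fill in the term-by-term bookkeeping that the paper leaves implicit, starting from $(\ref{eq:pfluc_conserv})$ and $(\ref{eq:ufluc_conserv})$, using $\nabla\cdot\bar u=0$ and $\bar p=\text{constant}$, and pushing all quadratic-in-fluctuation and viscous terms into the $O(\eps^2)$ remainder (with Lemma \ref{lem:asymp_key_result} implicitly justifying that the first-order coefficients then vanish identically). Your identification of $\frac{\Ma}{\re}\nabla\cdot\bar\sigma$ as the term that forces $k\ge 2$ matches the paper's own Remark following the theorem.
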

\begin{proof}
The proof is 
obtained by using Lemma \ref{lem:baseflow} and the equations  
for the first order terms. 
\end{proof}
\begin{remark}
The assumption  
$\frac{\Ma}{\re} = O(\eps^k),\, k\ge 2$ may be changed to  
$\frac{\Ma}{\re} = O(\eps)$ by adding $\frac{\Ma}{\re}\nabla \cdot \bar{\sigma}$ on the right hand side of 
$(\ref{eq:ufluc_conserv_neg2}).$ 
Moreover, $\frac{\Ma}{\re}$ can be interpreted as a linearisation error. 
\end{remark}

\begin{theorem}[Splitting Theorem]
Let Assumption $\ref{ass:one},$
$c=\text{constant}$ and  
$\frac{\Ma}{\re} = O(\eps^k), \, k \ge2$ be satisfied.
Then, the first order fluctuation 
components of $(\ref{eq:ass})$ can be split uniquely into an acoustic variable,
which is radiated by a $c-$dependent velocity, and a hydrodynamic variable, i.e.
\begin{equation}
\begin{pmatrix}
p' \\
u'
\end{pmatrix}
=
\begin{pmatrix}
p' \\
u^a
\end{pmatrix}
+
\begin{pmatrix}
0 \\
u^v
\end{pmatrix},
\quad \nabla \times u^a=0, \quad \nabla \cdot u^v=0.
\label{eq:splitting}
\end{equation}
Moreover, the acoustic variables satisfy independent equations
\begin{align}
\frac{\bar{D}}{Dt} 
\left( \frac{p'}{\bar{\rho}c^2} \right) 
+\nabla \cdot u^a
&=0, 
\label{eq:proj_p_a} \\
\frac{\bar{D}}{Dt} \nabla \cdot u^a
+\frac{1}{\bar{\rho}} \Delta p' 
&= 0,
\label{eq:proj_u_a} \\
\nabla \times \frac{\bar{D}}{Dt} u^v &=-\nabla \times \nabla \cdot ( u^a \otimes \bar{u}).
\label{eq:proj_u_v} 
\end{align} 
\label{th:splitting}
\end{theorem}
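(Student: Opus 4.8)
The plan is to obtain the decomposition $(\ref{eq:splitting})$ from a Helmholtz decomposition and then to read off the three evolution equations by feeding the conservation law of Theorem $\ref{th:conserv_law}$ into the structural identities of Lemmas $\ref{lem:baseflow}$ and $\ref{lem:baseflow2}$. The natural starting point is to simplify Theorem $\ref{th:conserv_law}$: with $\bar\rho=\text{const}$ and $c=\text{const}$ (Lemma $\ref{lem:baseflow2}$) and $\nabla\cdot\bar u=0$ (Lemma $\ref{lem:baseflow}$), using $\nabla\cdot(a\,\bar u)=\bar u\cdot\nabla a$ and $\nabla\cdot(u'\otimes\bar u)=(\bar u\cdot\nabla)u'$ for divergence-free $\bar u$, the laws $(\ref{eq:pfluc_conserv_neg2})$--$(\ref{eq:ufluc_conserv_neg2})$ collapse to
\[
\frac{\bar{D}}{Dt}\left(\frac{p'}{\bar\rho c^2}\right)+\nabla\cdot u'=0,
\qquad
\bar\rho\,\frac{\bar{D}}{Dt}u'+\nabla p'=0 .
\]

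For the splitting itself, I would invoke the Helmholtz decomposition of a sufficiently decaying field $u'$ on $\R^d$: there is a unique pair $(u^a,u^v)$ with $u'=u^a+u^v$, $\nabla\times u^a=0$ (hence $u^a=\nabla\phi$) and $\nabla\cdot u^v=0$. Uniqueness is the only delicate point here and follows because a field that is simultaneously curl- and divergence-free and decays at infinity must vanish; leaving $p'$ entirely in the acoustic block then yields $(\ref{eq:splitting})$. Since $\nabla\cdot u^v=0$ gives $\nabla\cdot u'=\nabla\cdot u^a$, the first collapsed law is already $(\ref{eq:proj_p_a})$.

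The two remaining equations come from differentiating the momentum identity. Taking its curl and using $\nabla\times\nabla p'=0$ gives $\nabla\times\frac{\bar{D}}{Dt}u'=0$; since $\frac{\bar{D}}{Dt}u^a=\partial_t u^a+\nabla\cdot(u^a\otimes\bar u)$ and $\nabla\times u^a=0$, one finds $\nabla\times\frac{\bar{D}}{Dt}u^a=\nabla\times\nabla\cdot(u^a\otimes\bar u)$, so the curl of the momentum identity splits off exactly $(\ref{eq:proj_u_v})$ for the vortical part. Taking instead the divergence gives $\bar\rho\,\nabla\cdot(\frac{\bar{D}}{Dt}u')+\Delta p'=0$; because $\nabla\cdot u'=\nabla\cdot u^a$, equation $(\ref{eq:proj_u_a})$ is equivalent to the statement that $\nabla\cdot$ and $\frac{\bar{D}}{Dt}$ commute on $u'$.

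I expect this commutator to be the main obstacle. Its defect equals the cross term $\partial_j\bar u_i\,\partial_i u'_j$, which is \emph{not} an instance of the identity $\nabla\bar u:\nabla\bar u=0$ in $(\ref{eq:pathline_asymp2})$. The idea is to exploit instead the first-order pathline identity $u'\cdot\nabla\bar u=0$ of $(\ref{eq:pathline_asymp})$: differentiating $u'_i\,\partial_i\bar u_j=0$ in $x_j$ and summing yields $\partial_j u'_i\,\partial_i\bar u_j+u'_i\,\partial_i(\nabla\cdot\bar u)=0$, whose second term vanishes by $\nabla\cdot\bar u=0$, and whose first term equals the cross term $\partial_j\bar u_i\,\partial_i u'_j$ after relabelling. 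Hence the commutator is zero and $(\ref{eq:proj_u_a})$ follows. I would close by observing that $(\ref{eq:proj_p_a})$ and $(\ref{eq:proj_u_a})$ involve only $p'$ and $u^a$, so they form a closed acoustic subsystem independent of $u^v$ (eliminating $\nabla\cdot u^a$ between them produces the convective wave equation), which is precisely the asserted independence of the acoustic variables.
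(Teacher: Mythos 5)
Your proof is correct, but it reaches the result by a genuinely different route than the paper. The paper keeps the conservation law in the form $\partial_t U + A_j\partial_{x_j}U = -\partial_{x_j}s_j$ with constant symbols $A_j$, applies a Laplace/Fourier transform, diagonalises $A=\bar\omega I-\alpha_jA_j$ as $R\Lambda R^{-1}$, and defines the acoustic and vortical parts by the spectral projections $RI_aR^{-1}$ and $RI_vR^{-1}$ (whose velocity blocks are $B=\alpha\alpha^\top/\|\alpha\|^2$ and $I-B$, i.e.\ precisely the Fourier form of the Helmholtz projection you use); the equations $(\ref{eq:proj_u_a})$ and $(\ref{eq:proj_u_v})$ are then obtained by back-transformation. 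You instead work entirely in physical space: Helmholtz-decompose $u'$, reduce Theorem~\ref{th:conserv_law} to $\frac{\bar D}{Dt}\bigl(p'/(\bar\rho c^2)\bigr)+\nabla\cdot u'=0$ and $\bar\rho\frac{\bar D}{Dt}u'+\nabla p'=0$, and take the divergence and curl of the momentum equation. The one point you rightly single out as delicate --- the commutator $[\nabla\cdot,\bar u\cdot\nabla]u'$ with defect $\partial_i\bar u_j\,\partial_j u'_i$ --- is killed by differentiating $u'\cdot\nabla\bar u=0$ and using $\nabla\cdot\bar u=0$; the paper uses exactly the same ingredient $(\ref{eq:pathline_asymp})$, only packaged as $\nabla\cdot(\bar u\otimes u')=\bar u\,\nabla\cdot u'$ before taking a second divergence, so the two computations coincide. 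What the paper's transform argument buys in addition is the modal interpretation: the acoustic block corresponds to the eigenvalues $\bar\omega\pm c\|\alpha\|$ of the symbol, which is what substantiates the clause ``radiated by a $c$-dependent velocity'' in the statement; in your version that interpretation only emerges a posteriori from the convective wave equation obtained by eliminating $\nabla\cdot u^a$. Conversely, your argument is shorter, avoids any discussion of the region of convergence of the Laplace transform, and makes the role of the decay hypothesis (uniqueness of the Helmholtz splitting) explicit, which the paper leaves implicit in its citation.
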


\begin{proof}
Using the notation
$U=(p',u'^\top)^\top$ 
and denote the vectors of unity by $e_j,\, j=1,\cdots,d+1,$
the equations 
$(\ref{eq:pfluc_conserv_neg2})$ and
$(\ref{eq:ufluc_conserv_neg2})$ directly give
\begin{equation*}
L(U)=\partial_t U +A_j \partial_{x_j} U =-\partial_{x_j}s_j,
\end{equation*}
where 
\begin{equation*}
A_j 
= \bar{\rho} c^2 e_1 e_{j+1}^\top +  \frac{1}{\bar{\rho}} e_{j+1} e_1^\top,\notag 
\end{equation*}
are constant matrices
and 
\begin{equation*}
s_j
= (p' \bar{u}_j ) e_1
+\sum_{k=1}^{d} \underbrace{u_k' \bar{u}_j}_{=S_{kj}} e_{k+1}.  
\end{equation*}
Therefore, we can apply the Laplace / Fourier transformation 
(cf. \cite[Appendix A]{Ewert.Schroeder_Acousticperturbationequations_2003})
$\hat{f}(\bar{\omega},\alpha)=\mathcal{LF}[f(t,x)](\bar{\omega},\alpha),$ $\bar{\omega}=-\omega +i \sigma, \, \alpha \in \R^d,$
that is 
\begin{equation*}
\hat{f}(\bar{\omega},\alpha)
=\frac{1}{(2 \pi)^{d+1}}\int_0^\infty \int_{\R^d} f(t,x) e^{-i(\alpha \cdot x- \bar{\omega}t)} \, dx \,dt.
\end{equation*}
The inverse transformation is defined by
\begin{equation*}
f(t,x)=\mathcal{LF}^{-1}[\hat{f}(\bar{\omega},\alpha)](t,x)
=\int_{-\infty+i \sigma}^{\infty +i \sigma} \int_{R^d}
 \hat{f}(\bar{\omega},\alpha) e^{i(\alpha \cdot x-\bar{\omega}t)} \, d\alpha \, d\bar{\omega},\quad \sigma > \sigma_0,
\end{equation*}
where $\sigma$ is a constant number such that the path of integration is in the region of convergence of $\hat{f}(\bar{\omega},\alpha).$
Using the following properties of the Laplace / Fourier transformation
\begin{equation*}
\begin{split}
\mathcal{LF}[\lambda f+ \mu g]&= \lambda \, \mathcal{LF}[f]+\mu \,\mathcal{LF}[g], \quad \lambda, \mu \in \R, \\
\mathcal{LF}[\partial_t f]&= -i \bar{\omega}\mathcal{LF}[f]
-\frac{1}{2 \pi} f_0, \\
\mathcal{LF}[\partial_{x_j} f]&=i \alpha_j \mathcal{LF}[f],
\quad j=1,\cdots,d, \\
\end{split}
\end{equation*}
where 
\begin{equation*}
f_0(\alpha)
=\frac{1}{(2 \pi)^d} \int_{\R^d} f(0,x)e^{-i\alpha \cdot x} \, dx,
\end{equation*}
it immediately implies 
\begin{equation*}
\mathcal{LF}[L(U)]
=-i \underbrace{ \left(\bar{\omega}I-\alpha_j A_j \right) }_{=A}   \hat{U}
- \frac{1}{2 \pi} U_0
=-i \alpha_j \hat{s}_j.
\end{equation*}
In order to separate the $c-$dependent acoustic terms from the hydrodynamic terms, we consider the decomposition $A=R \Lambda R^{-1},$ with the matrices given by
\begin{equation*}
R=
\begin{pmatrix}
1 & 1 & 0 & 0 & \cdots & 0 \\
\multirow{5}{*}{$\displaystyle \frac{\alpha}{\| \alpha \|c \bar{\rho}}$} &
\multirow{5}{*}{$\displaystyle \frac{-\alpha}{\| \alpha \|c \bar{\rho}}$} 
 & \alpha_d &  0 &\cdots & 0 \\
 & & 0  & \alpha_d  & & \vdots \\
 & & \vdots  &   &\ddots & 0 \\
 & & 0  & \cdots  &0 & \alpha_d \\
 & & -\alpha_1 & -\alpha_2 & \cdots & -\alpha_{d-1} \\
\end{pmatrix}
\end{equation*}
and
\begin{equation*} 
\Lambda =
\begin{pmatrix}
-\| \alpha \|c +\bar{\omega} & 0 &  0&  0 &\cdots & 0 \\
0 & \| \alpha \|c+\bar{\omega}     & 0 &0 &\cdots & 0 \\
0 &   0  & \bar{\omega}  & 0& \cdots & 0 \\
0 &   0  & 0  & \bar{\omega }& & \vdots \\
\vdots &   \vdots  & &  &\ddots & 0 \\
0 &   0  & 0&\cdots  &0 & \bar{\omega} \\
\end{pmatrix}.
\end{equation*}
Then, the decoupled system of equations follow as
\begin{equation*}
\Lambda R^{-1} \hat{U}=i R^{-1}\left(- i \alpha_j \hat{s}_j+\frac{1}{2 \pi} U_0 \right).
\end{equation*}
Now, by the concrete structure of $R$ and $\Lambda,$ we obtain the projected acoustic equations
\begin{equation}
R I_a\Lambda R^{-1} \hat{U}= R I_a R^{-1}\left( \alpha_j \hat{s}_j+\frac{i}{2 \pi} U_0 \right)
\label{eq:four_a}
\end{equation}
and the projected hydrodynamic equations
\begin{equation}
R I_v\Lambda R^{-1} \hat{U}= R I_v R^{-1}\left(\alpha_j \hat{s}_j+\frac{i}{2 \pi} U_0 \right),
\label{eq:four_v}
\end{equation}
where $I_a=e_1 e_1^\top + e_2 e_2^\top$ and $I_v=I-I_a.$
Using the notation 
$B=(\alpha \alpha^\top)/\|\alpha\|^2,$
it is not difficult to see that 
\begin{equation*}
R I_a \Lambda R^{-1}=
\begin{pmatrix}
\bar{\omega} & -\alpha_1 c^2 \bar{\rho} & \cdots &-\alpha_d c^2 \bar{\rho} \\
- \frac{\alpha_1}{\bar{\rho}} & & & \\
\vdots  & & \bar{\omega} B  & \\
- \frac{\alpha_d}{\bar{\rho}} & & & \\
\end{pmatrix}
\end{equation*}
and
\begin{equation*}
R I_a R^{-1}=
\begin{pmatrix}
1 & 0 & \cdots  & 0 \\
0 & & & \\
\vdots  & & B  & \\
0 & & & \\
\end{pmatrix}
\end{equation*}
are equal in the first row to $A$ and $I,$ resp. 
Therefore, we conclude that $(\ref{eq:pfluc_conserv_neg2})$ is a purely acoustic equation. 
The remaining acoustic equations lead to 
\begin{equation*}
-\frac{\alpha}{\bar{\rho}} \hat{p}' + \bar{\omega} B \hat{u}'
=B \left(\hat{S} \alpha+\frac{i}{2\pi} u_{0} \right), 
\end{equation*}
which reduce to the scalar equation
\begin{equation*}
-\frac{\|\alpha\|^2}{\bar{\rho}} \hat{p}' + \bar{\omega} \alpha^\top \hat{u}'
=\alpha^\top \hat{S} \alpha + \frac{i }{2 \pi} \alpha^\top u_0.
\end{equation*}

Now, the back transformation yields
\begin{equation}
\partial_t \nabla \cdot u' +\frac{1}{\bar{\rho}} \Delta p'
+ \nabla \cdot \nabla \cdot \left( u' \otimes \bar{u} \right)=0. 
\end{equation}
Applying the product rule, it follows that 
$\nabla \cdot \nabla \cdot \left( \bar{u} \otimes u' \right)
=\nabla \cdot \nabla \cdot \left( u'\otimes \bar{u} \right).$
Moreover, we obtain from $(\ref{eq:pathline_asymp})$
\begin{equation*}
\nabla \cdot (\bar{u} \otimes u')=\bar{u} \nabla \cdot u',
\end{equation*}
which leads to
\begin{equation}
\partial_t \nabla \cdot u' +\frac{1}{\bar{\rho}} \Delta p'
+ \nabla \cdot \left( \bar{u} \nabla \cdot u' \right) 
= 0.
\end{equation}
By the Helmholtz decomposition (cf. \cite[Chapter 3.5]{Achenbach_Wavepropagationin_1973})
we can decouple  
the 
irrotational and solenoidal 
parts of the velocity perturbation such that
\begin{equation*}
u'=u^{a} +u^{v} =\nabla \varphi + \nabla \times \psi,\quad \nabla \cdot \psi=0.
\label{eq:helmholtz_decomp1}
\end{equation*}
Consequently, 
\begin{equation*}
\nabla \times u^{a} =0,\quad \nabla \cdot u^{v}=0, 
\label{eq:helmholtz_decomp1}
\end{equation*}
proves the  assertion $(\ref{eq:proj_u_a}).$ 

Second, concerning the hydrodynamic projection we have
\begin{equation*}
R I_v \Lambda R^{-1}=
\begin{pmatrix}
0 &  \cdots & 0 \\
\vdots  & \, \bar{\omega} C  & \\
0 & & \\
\end{pmatrix}, \quad
R I_v R^{-1}=
\begin{pmatrix}
0 &  \cdots  & 0 \\
\vdots  & \, C  & \\
0 & & \\
\end{pmatrix}, \quad
C=\frac{\left( \| \alpha \|^2 I - \alpha \alpha^\top \right)}{\| \alpha \|^2}.
\end{equation*}
Thus, we obtain
\begin{equation}
\bar{\omega} \left( \alpha^\top \alpha I- \alpha \alpha^\top \right) \hat{u}'
=\left( \alpha^\top \alpha I- \alpha \alpha^\top \right)(\hat{S} \alpha +\frac{i}{2\pi} u_0).
\label{eq:proj_u_v2}
\end{equation}
In order to consider the back-transformation we separate the two cases 
\begin{equation*}
\begin{split}
d=2 &: \quad
\alpha^\top \alpha I - \alpha \alpha^\top
= \begin{pmatrix}
\alpha_2^2 & -\alpha_1 \alpha_2 \\
-\alpha_1 \alpha_2 & \alpha_1^2
\end{pmatrix}
=\begin{pmatrix}
\alpha_2 \\
-\alpha_1
\end{pmatrix}
\begin{pmatrix}
\alpha_2,-\alpha_1
\end{pmatrix}, \\
d=3 &: \quad
\alpha^\top \alpha I - \alpha \alpha^\top
= \begin{pmatrix}
\alpha_2^2+\alpha_3^2 & -\alpha_1 \alpha_2 & -\alpha_1 \alpha_3 \\
-\alpha_1 \alpha_2 & \alpha_1^2 + \alpha_3^2 & -\alpha_2 \alpha_3 \\
-\alpha_1 \alpha_3 & -\alpha_2 \alpha_3 & \alpha_1^2+ \alpha_2^2
\end{pmatrix}
=- 
\begin{pmatrix}
0 & -\alpha_3 &\alpha_2 \\
\alpha_3 & 0 &-\alpha_1 \\
-\alpha_2 & \alpha_1 & 0
\end{pmatrix}^2.
\end{split}
\end{equation*}
Hence, applying the Helmholtz decomposition, we conclude that the Laplace / Fourier transformation of 
\begin{equation*}
\nabla \times \frac{\bar{D}}{Dt} u^v =-\nabla \times \frac{\bar{D}}{Dt} u^a.
\end{equation*}
satisfies $(\ref{eq:proj_u_v2}).$ Thus the Splitting Theorem is proved.
\end{proof}

\begin{remark}
The definition of the base flow and Theorem \ref{th:splitting}  
answer the question raised in the introduction: 
The highly-unsteady flow variables $p$ and $u$ could be decomposed into acoustic variables $p', u^a$ and hydrodynamic variables $\bar{p}, \bar{u}+u^v,$ where the acoustic quantities satisfy the closed equations  $(\ref{eq:proj_p_a})$ and $(\ref{eq:proj_u_a}).$
Let us highlight the key ingredients:

\begin{itemize}
\item
The pathline-averaged base flow and their properties from Lemma \ref{lem:baseflow} and \ref{lem:baseflow2};
\item
The divergence structure of conservation law $(\ref{eq:pfluc_conserv_neg2})-(\ref{eq:ini_fluc_conserv_neg2});$
\item
The separation of the divergence free velocity perturbations from the irrotational part.
\end{itemize}
Notice that, the divergence structure of the conservation law was essential to formulate an equation without the divergence free velocity (see $(\ref{eq:proj_u_a})$).

\end{remark}

Moreover, the 
propagation of sound can also be formulated as follows:

\begin{corollary}
Under the assumptions of Theorem \ref{th:splitting}, the acoustic perturbations satisfy 
\begin{align}
\frac{1}{c^2} \left(\frac{\bar{D}}{Dt}\right)^2 p' - \Delta p'&=
0
\quad \text{in } (0,\tau) \times \R^d,
\label{eq:conv_wave2} \\
p'&=p'_0 \text{ in }\left\{ 0 \right\} \times \R^d.
\label{eq:ini_conv_wave2} 
\end{align}
\end{corollary}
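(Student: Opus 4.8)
The plan is to eliminate the acoustic divergence $\nabla\cdot u^a$ between the two closed acoustic relations \eqref{eq:proj_p_a} and \eqref{eq:proj_u_a}, thereby obtaining a single second-order equation in $p'$ alone.

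First I would record that, under the hypotheses of Theorem \ref{th:splitting}, Lemma \ref{lem:baseflow2} gives $\bar{\rho}=\bar{p}/c^2=\text{const}$, while $c$ is constant by assumption. Hence both $\bar{\rho}$ and $c$ are scalar constants that commute with the convective derivative $\frac{\bar{D}}{Dt}=\partial_t+\bar{u}\cdot\nabla$, and may be pulled freely through it.

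Next I would apply $\frac{\bar{D}}{Dt}$ to all of \eqref{eq:proj_p_a}. Since $\bar{\rho}c^2$ is constant, this gives
\[
\frac{1}{\bar{\rho}c^2}\left(\frac{\bar{D}}{Dt}\right)^2 p' + \frac{\bar{D}}{Dt}\,\nabla\cdot u^a = 0 .
\]
The key observation is that the second term is precisely the quantity governed by \eqref{eq:proj_u_a}: no commutation identity between $\frac{\bar{D}}{Dt}$ and $\nabla\cdot$ is needed, because $\frac{\bar{D}}{Dt}\nabla\cdot u^a$ appears verbatim. Inserting $\frac{\bar{D}}{Dt}\nabla\cdot u^a=-\frac{1}{\bar{\rho}}\Delta p'$ from \eqref{eq:proj_u_a} and multiplying by the constant $\bar{\rho}$ yields \eqref{eq:conv_wave2}. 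The initial condition \eqref{eq:ini_conv_wave2} is inherited directly from \eqref{eq:ini_fluc_conserv_neg2}, with $p'_0$ the pressure component of $g$.

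I do not expect a genuine obstacle here: the result is essentially a one-line elimination once the two acoustic equations are in hand. The only points that demand care are invoking the constancy of $\bar{\rho}$ and $c$ so that they commute with $\frac{\bar{D}}{Dt}$, and noticing that differentiating \eqref{eq:proj_p_a} reproduces exactly the term supplied by \eqref{eq:proj_u_a}, so that no interchange of $\frac{\bar{D}}{Dt}$ with the spatial divergence (which would be delicate, since $\bar{u}$ is non-uniform) is ever required.
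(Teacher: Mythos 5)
Your proposal is correct and coincides with the paper's own argument, which is exactly ``taking $\frac{\bar{D}}{Dt}(\ref{eq:proj_p_a})-(\ref{eq:proj_u_a})$'': you apply the convective derivative to $(\ref{eq:proj_p_a})$, substitute $(\ref{eq:proj_u_a})$ for the resulting term $\frac{\bar{D}}{Dt}\nabla\cdot u^a$, and use the constancy of $\bar{\rho}$ and $c$ from Lemma \ref{lem:baseflow2}. Nothing further is needed.
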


\begin{proof}
Taking $\frac{\bar{D}}{Dt}(\ref{eq:proj_p_a})-(\ref{eq:proj_u_a})$
immediately implies the result.
\end{proof}

\section{Sources of sound}
In the previous section it was demonstrated that the propagation of sound waves are determined by the homogeneous convective wave equation. Now, we define the sources of sound as the deviations between the convective wave and the Navier-Stokes equation. To do so, 
the connection of the acoustic fluctuation along the pathline-averaged flow to the Lighthill approach is discussed.  

Therefore, the current understanding of Lighthill's solution $\rho_L'$ and $p_L',$ resp. is repeated \cite[p. 746]{Fedorchenko_Onsomefundamental_2000}: 
\begin{quote}
There is no reason for regarding $\rho_L'$ and $p_L'$ as the acoustic components because no accurate definition of acoustic components in a highly-unsteady flow has been given in reference \cite{Lighthill_soundgeneratedaerodynamically._1952}, so that $\rho_L'$ and $p_L'$ represent merely the differences between the local values of flow variables $\rho,p$ and arbitrary constants $\rho_0,p_0$.
\end{quote}
In this context let us highlight 
that the acoustic pressure perturbation satisfies Lighthill's analogy:

\begin{corollary}
Let the assumptions of Theorem \ref{th:splitting} be satisfied. 
Then, the acoustic perturbations satisfy 
\begin{align}
\frac{1}{c^2} \partial_t^2 p' - \Delta p'&= \nabla \cdot \nabla \cdot \left( \rho u \otimes u \right)+ O(\eps^2)
\quad \text{in } (0,\tau) \times \R^d,
\label{eq:lighthill} \\
p'&=p'_0 \text{ in }\left\{ 0 \right\} \times \R^d.
\label{eq:ini_lighthill} 
\end{align}
\label{cor:lighthill}
\end{corollary}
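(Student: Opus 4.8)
The plan is to recover Lighthill's analogy directly from the full isentropic Navier-Stokes system $(\ref{eq:continuity_isentrop})$--$(\ref{eq:state1_isentrop})$ rather than from the convective wave equation $(\ref{eq:conv_wave2})$, because the quadratic source $\nabla \cdot \nabla \cdot (\rho u \otimes u)$ still carries the \emph{un-expanded} flow variables and is therefore most transparent at the level of the conservation form of the equations. First I would put continuity and momentum into divergence form, using $(\ref{eq:continuity_isentrop})$ to rewrite the material derivative in $(\ref{eq:momentum_isentrop})$ as $\partial_t(\rho u)+\nabla \cdot (\rho u \otimes u)$. Then, following the classical Lighthill manipulation, I would apply $\partial_t$ to the continuity equation and $\nabla \cdot$ to the momentum equation and subtract, so that the mixed term $\partial_t \nabla \cdot (\rho u)$ drops out. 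This produces the exact identity
\begin{equation*}
\partial_t^2 \rho - \Delta p = \nabla \cdot \nabla \cdot (\rho u \otimes u) - \frac{\Ma}{\re} \nabla \cdot \nabla \cdot \sigma .
\end{equation*}

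Second, I would feed in the asymptotic information. The viscous contribution is controlled by the hypothesis $\frac{\Ma}{\re}=O(\eps^k),\, k\ge 2$ (with $\sigma$ and its derivatives of order one), so that $\frac{\Ma}{\re}\nabla \cdot \nabla \cdot \sigma = O(\eps^2)$. The state relation $(\ref{eq:state1_coeff})$ together with $c=\text{constant}$ yields $\rho = p/c^2 + O(\eps^2)$, hence $\partial_t^2 \rho = c^{-2}\partial_t^2 p + O(\eps^2)$; at this point I would invoke the \emph{uniformity} of the expansion in Assumption $\ref{ass:one}$ and the standing smoothness hypothesis to ensure that the derivatives of the $O(\eps^2)$ remainder stay $O(\eps^2)$. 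Finally, since $\bar{p}$ is constant by Lemma $\ref{lem:baseflow}$, the decomposition $p=\bar{p}+p'+O(\eps^2)$ allows $p$ to be replaced by $p'$ under both $\partial_t^2$ and $\Delta$, which is exactly $(\ref{eq:lighthill})$; the initial condition $(\ref{eq:ini_lighthill})$ is the restriction of the data $(\ref{eq:ini_fluc_conserv_neg2})$.

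I expect the only genuinely delicate point to be this order bookkeeping---specifically the claim that differentiating the asymptotic remainder preserves the $O(\eps^2)$ bound, which is where the word \enquote{uniform} in Assumption $\ref{ass:one}$ does the work. As a consistency check (and the natural bridge back to the convective wave equation $(\ref{eq:conv_wave2})$) I would verify that the $O(1)$ part of the source vanishes: with $\bar{\rho}$ constant one has $\nabla \cdot \nabla \cdot (\bar{\rho}\,\bar{u}\otimes\bar{u}) = \bar{\rho}\,(\nabla \bar{u}:\nabla \bar{u} + \bar{u}\cdot\nabla(\nabla \cdot \bar{u}))$, which is zero by Lemma $\ref{lem:baseflow}$. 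This confirms that the right-hand side of $(\ref{eq:lighthill})$ is indeed $O(\eps)$, matching the order of the left-hand side.
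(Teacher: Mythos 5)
Your argument is essentially the paper's own proof: the same classical Lighthill manipulation yielding the exact identity $\partial_t^2\rho-\Delta p=\nabla\cdot\nabla\cdot\left(\rho u\otimes u\right)-\frac{\Ma}{\re}\nabla\cdot\nabla\cdot\sigma$ (the paper writes it equivalently as $\partial_t^2\rho-c^2\Delta\rho=\nabla\cdot\nabla\cdot\left(\rho u\otimes u+(p-c^2\rho)I-\frac{\Ma}{\re}\sigma\right)$), followed by the same order bookkeeping via $(\ref{eq:state1_coeff})$, $\frac{\Ma}{\re}=O(\eps^2)$ and Lemma~\ref{lem:baseflow} to pass from $p$ to $p'$. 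Your closing consistency check is a nice extra (though your expansion of $\nabla\cdot\nabla\cdot(\bar{\rho}\,\bar{u}\otimes\bar{u})$ drops a factor of $2$ and the $(\nabla\cdot\bar{u})^2$ term, the conclusion still holds since $\nabla\cdot\bar{u}=0$ and $\nabla\bar{u}:\nabla\bar{u}=0$), and the only step of the paper's proof you omit is the final appeal to the Splitting Theorem to justify calling $p'$ an acoustic quantity.
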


\begin{proof}
As usual, the Navier-Stokes equations $(\ref{eq:continuity_isentrop})-(\ref{eq:initial_cond_isentrop})$ imply
\begin{equation}
\partial_t^2 \rho -c^2 \Delta \rho= \nabla \cdot \nabla \cdot \left( \rho u \otimes u + \left( p - c^2 \rho \right) I - \frac{\Ma}{\re}\sigma\right).
\label{eq:first_step_lighthill}
\end{equation}
Using $(\ref{eq:state1_coeff})$ and 
$\frac{\Ma}{\re} \in O(\eps^2)$
we obtain
\begin{equation*}
\frac{1}{c^2} \partial_t^2 p - \Delta p= \nabla \cdot \nabla \cdot \left( \rho u \otimes u \right)+ O(\eps^2).
\end{equation*}
Furthermore, applying 
the decomposition $(\ref{eq:ass})$ 
and $(\ref{eq:ns_asymp_first_order})$
it follows that 
\begin{equation*}
\frac{1}{c^2} \partial_t^2 p' - \Delta p'= \nabla \cdot \nabla \cdot \left( \rho u \otimes u \right)+ O(\eps^2).
\end{equation*}
By the Splitting Theorem we conclude that $p'$ is an acoustic quantity.
\end{proof}
It is a well known fact, that the right-hand side depends on the solution variable $p'.$ Therefore, the right-hand side has to be interpreted / modelled as an equivalent source. This interpretation was often criticised because both sides are differential expressions of equal standing. 
Now, the previous results can be used to attack this problem and identify the true sources of sound.

\begin{theorem}
Let the assumptions of Theorem \ref{th:splitting} be satisfied. 
Then, the acoustic perturbations satisfy 
\begin{align}
\frac{1}{c^2} \left(\frac{\bar{D}}{Dt} \right)^2 p' - \Delta p'&= \nabla \cdot \nabla \cdot \left( \bar{\rho} u' \otimes u' \right)+ O(\eps^2)
\quad \text{in } (0,\tau) \times \R^d,
\label{eq:conv_lighthill} \\
p'&=p'_0 \text{ in }\left\{ 0 \right\} \times \R^d.
\label{eq:ini_conv_lighthill} 
\end{align}
\label{th:soundsources}
\end{theorem}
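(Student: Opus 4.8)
The plan is to start from the classical Lighthill form already established in Corollary~\ref{cor:lighthill},
\begin{equation*}
\frac{1}{c^2}\partial_t^2 p' - \Delta p' = \nabla\cdot\nabla\cdot(\rho u\otimes u) + O(\eps^2),
\end{equation*}
and to turn the laboratory-frame wave operator $\frac{1}{c^2}\partial_t^2-\Delta$ into the convective operator $\frac{1}{c^2}\left(\frac{\bar{D}}{Dt}\right)^2-\Delta$ while simultaneously stripping the Lighthill stress of every contribution that is not genuinely quadratic in the fluctuations. Writing
\begin{equation*}
\frac{1}{c^2}\left(\frac{\bar{D}}{Dt}\right)^2 p' - \Delta p' = \frac{1}{c^2}\left[\left(\frac{\bar{D}}{Dt}\right)^2 - \partial_t^2\right]p' + \left(\frac{1}{c^2}\partial_t^2 p' - \Delta p'\right)
\end{equation*}
and inserting Corollary~\ref{cor:lighthill}, the whole theorem reduces to the single identity
\begin{equation*}
\frac{1}{c^2}\left[\left(\frac{\bar{D}}{Dt}\right)^2 - \partial_t^2\right]p' = -\,\nabla\cdot\nabla\cdot\left(\rho u\otimes u - \bar{\rho}\,u'\otimes u'\right) + O(\eps^2),
\end{equation*}
which I would verify by expanding both sides.

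For the right-hand side I would substitute the decomposition $(\ref{eq:ass})$ together with $\rho'=p'/c^2$ from $(\ref{eq:state1_coeff})$ and $\bar{\rho}=\text{const}$ from Lemma~\ref{lem:baseflow2}, and sort by powers of $\eps$. The purely mean contribution $\nabla\cdot\nabla\cdot(\bar{\rho}\,\bar{u}\otimes\bar{u})$ vanishes: with $\bar{\rho}$ constant it equals $\bar{\rho}\,\partial_j\partial_i(\bar{u}_j\bar{u}_i)$, and repeated use of the product rule expresses this entirely through the factors $\nabla\cdot\bar{u}$ and $\nabla\bar{u}:\nabla\bar{u}$, all of which vanish by $(\ref{eq:ns_asymp_first_order})$ and $(\ref{eq:pathline_asymp2})$. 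The cubic term $\rho'\,u'\otimes u'$ is absorbed into the $O(\eps^2)$ remainder, so that after the quadratic piece $\bar{\rho}\,u'\otimes u'$ has been removed only the first-order cross terms $\bar{\rho}(\bar{u}\otimes u'+u'\otimes\bar{u})+\rho'\,\bar{u}\otimes\bar{u}$ survive on the right.

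On the left-hand side I would use $\frac{\bar{D}}{Dt}=\partial_t+\bar{u}\cdot\nabla$ to obtain
\begin{equation*}
\left[\left(\frac{\bar{D}}{Dt}\right)^2-\partial_t^2\right]p' = (\partial_t\bar{u})\cdot\nabla p' + 2\,\bar{u}\cdot\nabla\partial_t p' + \bar{u}\cdot\nabla(\bar{u}\cdot\nabla p'),
\end{equation*}
and then eliminate the two time derivatives that are not admissible in a convective source: $\partial_t\bar{u}=-\bar{u}\cdot\nabla\bar{u}$ from the pathline identity $(\ref{eq:pathline_asymp})$, and $\partial_t p' = -\bar{u}\cdot\nabla p' - \bar{\rho}c^2\nabla\cdot u'$ from the acoustic equation $(\ref{eq:proj_p_a})$ of Theorem~\ref{th:splitting} (recall $\nabla\cdot u'=\nabla\cdot u^a$ since $\nabla\cdot u^v=0$). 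What remains is to match the two sides term by term.

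The step I expect to be the main obstacle is exactly this matching of the first-order cross terms against the convective correction. Carrying out the double divergences of $\bar{u}\otimes u'$ and $u'\otimes\bar{u}$ produces, besides the expected $\bar{u}\cdot\nabla(\nabla\cdot u')$ contributions, a stray term $\nabla\bar{u}:\nabla u'$, and the identity closes only because this term vanishes. I would establish $\nabla\bar{u}:\nabla u'=0$ by differentiating the pathline relation $u'\cdot\nabla\bar{u}=0$ of $(\ref{eq:pathline_asymp})$: taking its divergence gives $\partial_j(u'_i\partial_i\bar{u}_j) = (\partial_j u'_i)(\partial_i\bar{u}_j) + u'\cdot\nabla(\nabla\cdot\bar{u})=0$, and since $\nabla\cdot\bar{u}=0$ by $(\ref{eq:ns_asymp_first_order})$ the surviving contraction is precisely $\nabla\bar{u}:\nabla u'$. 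With this in hand, the $\bar{u}\cdot\nabla(\nabla\cdot u')$, $\bar{u}\cdot\nabla(\bar{u}\cdot\nabla p')$ and $\nabla p'\cdot(\bar{u}\cdot\nabla\bar{u})$ contributions cancel pairwise across the identity, leaving $\nabla\cdot\nabla\cdot(\bar{\rho}\,u'\otimes u')$ as the only source and completing the proof.
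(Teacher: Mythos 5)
Your proposal is correct, and it is built on the same skeleton as the paper's argument: start from the Lighthill form of Corollary \ref{cor:lighthill} and account for the operator difference $\left(\tfrac{\bar{D}}{Dt}\right)^2-\partial_t^2$ by means of mass conservation and the base-flow identities of Lemma \ref{lem:baseflow}. The execution differs in a way worth recording. The paper works with the full density and the nonlinear continuity equation $\partial_t\rho=-\nabla\cdot(\rho u)$, arriving at the single exact identity $\left(\tfrac{\bar{D}}{Dt}\right)^2\rho=\partial_t^2\rho-\nabla\cdot\nabla\cdot\left(\rho u\otimes u-\rho u'\otimes u'\right)$ \emph{before} invoking the asymptotic expansion, so the convective Lighthill equation drops out by subtraction. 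You linearise first, eliminate $\partial_t p'$ through the perturbation equation $(\ref{eq:proj_p_a})$, and match the $O(\eps)$ cross terms of the Lighthill stress against the operator difference; I verified that the three cross contributions $\bar{\rho}\,\bar{u}\otimes u'$, $\bar{\rho}\,u'\otimes\bar{u}$ and $\rho'\,\bar{u}\otimes\bar{u}$ reproduce exactly the three surviving terms $-\tfrac{1}{c^2}(\bar{u}\cdot\nabla\bar{u})\cdot\nabla p'-\tfrac{1}{c^2}\bar{u}\cdot\nabla(\bar{u}\cdot\nabla p')-2\bar{\rho}\,\bar{u}\cdot\nabla(\nabla\cdot u')$, so the matching closes as you claim. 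The price of your order-by-order route is that you must isolate the auxiliary contraction identity $\partial_j u_i'\,\partial_i\bar{u}_j=0$, which you obtain correctly by taking the divergence of $u'\cdot\nabla\bar{u}=0$ and using $\nabla\cdot\bar{u}=0$ (strictly, only the $u'\otimes\bar{u}$ block produces this stray term; the $\bar{u}\otimes u'$ block is killed directly by $u'\cdot\nabla\bar{u}=0$); the paper sidesteps this by contracting with the combined relation $u\cdot\nabla\bar{u}=\bar{u}\cdot\nabla\bar{u}$. Your version gives a self-contained check at the linearised level, while the paper's gives a cleaner exact identity that exhibits the structure $\rho u\otimes u-\rho u'\otimes u'$ before any asymptotics; both are valid proofs of the theorem.
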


\begin{proof}
By the definition of the material derivative
and
$(\ref{eq:ns_asymp_first_order}),$ we have
\begin{equation*}
\left( \frac{\bar{D}}{Dt} \right)^2 \rho
=
\partial_t^2 \rho + \partial_t \bar{u} \cdot \nabla \rho
+2 \bar{u} \cdot \nabla \partial_t \rho 
+ \nabla \cdot \nabla \cdot \left( \rho \bar{u} \otimes \bar{u} \right)
- \nabla \bar{u}: \nabla \left( \rho \bar{u} \right)^\top.
\end{equation*} 
Then, taking into consideration $(\ref{eq:pathline_asymp2})$
and $(\ref{eq:continuity}),$ it follows
\begin{equation*}
- \nabla \bar{u}: \nabla \left( \rho \bar{u} \right)^\top
+2 \bar{u} \cdot \nabla \partial_t \rho 
=
-\left(\bar{u} \cdot \nabla \bar{u}\right)\cdot \nabla \rho
-2 \bar{u} \cdot \nabla \nabla \cdot (\rho u).
\end{equation*} 
Moreover, the relations
$(\ref{eq:ns_asymp_first_order})$
and
$(\ref{eq:pathline_asymp})$ yield
\begin{equation*}
- \bar{u} \cdot \nabla \nabla \cdot (\rho u)
=
-  \nabla \cdot \nabla \cdot \left( \rho \bar{u} \otimes u \right)
+  \nabla \cdot \left( \rho \bar{u} \cdot \nabla \bar{u} \right).
\end{equation*} 
Hence, applying
$(\ref{eq:ns_asymp_first_order})$
and
$(\ref{eq:pathline_asymp})$
again, we have
\begin{equation*}
\left( \frac{\bar{D}}{Dt} \right)^2 \rho
=
\partial_t^2 \rho - \nabla \cdot \nabla \cdot \left( \rho u \otimes u - \rho u' \otimes u'\right),
\end{equation*}
which completes the proof by 
using the analog arguments as in the proof of Corollary \ref{cor:lighthill}.
\end{proof}

From the point of view of hybrid methods, 
we have a known hydrodynamic solution $u$ and after the computation of the pathline-averaged base flow the right-hand side is known. Hence, $(\ref{eq:conv_lighthill})$ is a closed equation with the correct propagation of sound and a non-linear sound source of second order. 
In contrast to Lighthill's equation $(\ref{eq:lighthill}),$ we need an additional effort for the base flow and right hand side computation. More precisely, we have to solve $2 \times d$ linear scalar transport equations. Here, one has to decide for every application whether a known right hand side (or equivalent a well defined acoustic quantity) is needed or if an approximate right hand side is sufficient.

\section{Hyperbolic conservation laws}
In the following, it is demonstrated that the first order perturbations of $(\ref{eq:ass})$ satisfy an energy conservation law. 
This energy identity which is based on a symmetry argument
is an analogous result to the energy Theorem of Möhring's analogy \cite{Moehring_wellposedacoustic_1999}.
As a consequence, we can define the sound intensity which is a generalisation of the classical definition for highly-unsteady flows.

Notice that, the system
of Theorem \ref{th:conserv_law} 
can be written in conservation form 
\begin{align}
L(U)=\partial_t U + \partial_{x_i} F_i(U)
&=0, \text{ in } (0,\tau) \times \Omega, 
\label{eq:conservation_law}\\
U&=U_0, \text{ in } \{0\} \times \Omega.
\label{eq:conservation_law_ini}
\end{align}
Here, the solution vector and the flux functions are defined by
\begin{equation*}
U=\left( \frac{p'}{\bar{\rho} c^2},\bar{\rho} u'^{\top} \right)^{\top},\quad
F_i(U)=
\bar{u}_i U+ \frac{U_{i+1}}{\bar{\rho}}  e_1 + \bar{\rho} c^2 U_1 e_{i+1}.
\label{}
\end{equation*}
Defining the coefficient mapping $A_i : \R^{d+1} \to \R^{d+1,d+1}$ by
\begin{equation*}
A_i 
=\bar{u}_i I+\frac{1}{\bar{\rho}}e_1 e_{i+1}^\top + \bar{\rho} c^2 e_{i+1} e_1^\top,
\label{}
\end{equation*}
we obtain the identity $F_i(U)=A_i U.$
Using the product rule we find 
$\partial_{x_i} F_i(U)=A_i \partial_{x_i} U+ \partial_{x_i} A_i U.$

Now, setting
\begin{equation*}
A(\nu)= A_i \nu_i,
\label{eq:A}
\end{equation*}
for all $\nu \in \R^d,$ we can show that the system $(\ref{eq:conservation_law})$ is hyperbolic. 
In other words, the matrix $A(\nu)$ can be diagonalised as
\begin{equation*}
A(\nu)=R \Lambda R^{-1}, \quad 
\Lambda=\text{diag}(\lambda_1,\dots,\lambda_{d+1}), \quad
\lambda_1 \leq \dots \leq \lambda_{d+1} \in \R,
\label{eq:A_diag}
\end{equation*}
where the eigenvalues are given by
\begin{equation*}
\lambda_1=\bar{u} \cdot \nu- c,\, \lambda_2= \dots= 
\lambda_d=\bar{u} \cdot \nu, \lambda_{d+1}=\bar{u} \cdot \nu + c.
\label{eq:eigenvalues}
\end{equation*}

\subsection{The conservation law in symmetry variables}
Let 
$S_0:\R^{d+1}\to \R^{d+1,d+1}$ 
be a symmetric positive definite mapping 
which simultaneously symmetrises all $A_i$ from left, that is $S_0 A_i=(S_0 A_i)^\top.$
It follows that $A_i S_0^{-1}=S_0^{-1} S_0 A_i S_0^{-1}$ is also symmetric and
consequently, 
\begin{equation*}
L(U)=S_0^{-1} \partial_t (S_0 U) + A_i S_0^{-1}\partial_{x_i} (S_0 U) +  \partial_{x_i} A_i U - S_0^{-1} \partial_t S_0 U -  A_i S_0^{-1} \partial_{x_i} S_0 U. 
\end{equation*}
Moreover, setting $\hat{A}_0=S_0^{-1}, \,\hat{A}_i=A_i \hat{A}_0$ and $V=\hat{A}_0^{-1} U,$  we get 
\begin{equation}
\hat{L}(V)=L(U(V))=\hat{A}_0 \partial_t V + \hat{A}_i \partial_{x_i} V + (\partial_t \hat{A}_0  +\partial_{x_i} \hat{A}_i) V  . 
\label{eq:sym_form}
\end{equation}

Now, we consider 
$V=\left(\bar{\rho} p',\bar{\rho} u'^{\top} \right)^{\top}=\left( \bar{\rho}^2 c^2 U_1,U_2,\dots,U_{d+1} \right)^\top$
and a short calculation gives
\begin{equation*}
\hat{A}_0
=
\frac{1}{\bar{\rho}^2 c^2} e_1 e_1^\top + \sum_{j=2}^{d+1}e_j e_j^\top, \quad
\hat{A}_i
=
\frac{\bar{u}_i}{\bar{\rho}^2 c^2} e_1 e_1^\top+ \sum_{j=2}^{d+1} \bar{u}_i e_j e_j^\top+\frac{1}{\bar{\rho}} e_1 e_{i+1}^\top+\frac{1}{\bar{\rho}} e_{i+1} e_1^\top,
\label{eq:defhatA}
\end{equation*} 
which satisfies the symmetry conditions on $\hat{A}_0$ and $\hat{A}_i.$
Setting
\begin{equation*}
\hat{F}_i(V)= F_i(U(V))=\hat{A}_i V=\bar{u}_i \left( \frac{V_1}{\bar{\rho}^2 c^2},V_2,\dots,V_{d+1} \right)^\top +\frac{V_{i+1}}{\bar\rho} e_1 + \frac{V_1}{\bar{\rho}} e_{i+1},
\label{}
\end{equation*}
we have the symmetric hyperbolic system in conservation form:
\begin{align}
\hat{A}_0 \partial_t V + \partial_{x_i} \hat{F}_i(V) + \partial_t \hat{A}_0 V&=0, \text{ in } (0,\tau) \times \Omega, 
\label{eq:sym_conservation_law}\\
V&=V_0, \text{ in } \{0\} \times \Omega.
\label{eq:sym_conservation_law_ini}
\end{align}
Finally, the framework of symmetry variables for hyperbolic conservation laws allows us to prove the energy conservation law.
 
\begin{theorem}
Let the assumptions of Theorem \ref{th:splitting} be satisfied. 
Then, the 
fluctuation components of $(\ref{eq:ass})$ admit
an additional conservation law
\begin{align*}
\partial_t \eta(U) 
+\partial_{x_i} q_i(U)
&=0,
\text{ in } (0,\tau) \times \Omega, 
\\
U&=U_0, \text{ in } \{0\} \times \Omega,
\end{align*}
where
\begin{equation*}
\begin{split}
\eta(U)
&=\frac{1}{2} V(U)^\top \hat{A}_0 V(U)
=\frac{1}{2}\left( \bar{\rho}^2 c^2 U_1^2+U_2^2+\cdots+U_{d+1}^2 \right)
=\frac{1}{2} \left( \frac{p'^2}{c^2} + \bar{\rho}^2 u'^\top u' \right),\\
q_i(U)
&=\frac{1}{2}  V(U)^\top \hat{A}_i V(U) 
= \eta(U) \bar{u}_i+\bar{\rho} c^2 U_1 U_{i+1}
= \eta(U) \bar{u}_i+\bar{\rho} p' u_i'.
\end{split}
\label{eq:eta_qi}
\end{equation*}

Under the assumption that $U$ 
vanishes for large but finite $\|x\|,$ there exists the energy
\begin{equation*}
\int_{\R^d} \eta(U(t,x)) \, dx
=\int_{\R^d} \eta(U(0,x)) \, dx
, \quad  \forall t \in [0,\tau],
\label{eq:energy_identity}
\end{equation*} 
and the acoustic energy identity, resp. 
\begin{equation*}
\begin{split}
& \frac{1}{2}\int_{\R^d} \frac{p'(t,x)^2}{c^2}+ \bar{\rho}^2 (u^a(t,x))^\top u^a(t,x) \, dx \\
&=\int_{\R^d} \eta(U(0,x)) \, dx
- \frac{1}{2} \int_{\R^d} \bar{\rho}^2 (u^v(t,x))^\top u^v(t,x)\,dx
, \quad  \forall t \in [0,\tau].
\end{split}
\label{eq:energy_identity}
\end{equation*}

\label{th:energy}
\end{theorem}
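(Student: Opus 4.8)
The plan is to read off the additional conservation law directly from the symmetric form $(\ref{eq:sym_conservation_law})$ by the classical energy method for symmetric hyperbolic systems, and then to use the base-flow properties of Lemmas~\ref{lem:baseflow} and~\ref{lem:baseflow2} to annihilate the remaining lower-order term. First I would multiply the symmetric system $(\ref{eq:sym_conservation_law})$, equivalently the operator form $(\ref{eq:sym_form})$ with $\hat{L}(V)=0$, from the left by $V^\top$. Since $\hat{A}_0$ and $\hat{A}_i$ are symmetric (by construction of the symmetry variables), the two leading terms are total derivatives of quadratic forms:
\begin{equation*}
V^\top \hat{A}_0 \partial_t V = \tfrac12 \partial_t\!\left( V^\top \hat{A}_0 V \right) - \tfrac12 V^\top (\partial_t \hat{A}_0) V, \quad V^\top \hat{A}_i \partial_{x_i} V = \tfrac12 \partial_{x_i}\!\left( V^\top \hat{A}_i V \right) - \tfrac12 V^\top (\partial_{x_i} \hat{A}_i) V.
\end{equation*}
Collecting all terms and using the definitions of $\eta$ and $q_i$ as the quadratic forms $\tfrac12 V^\top\hat{A}_0 V$ and $\tfrac12 V^\top\hat{A}_i V$, one arrives at
\begin{equation*}
\partial_t \eta(U) + \partial_{x_i} q_i(U) + \tfrac12 V^\top\!\left( \partial_t \hat{A}_0 + \partial_{x_i} \hat{A}_i \right) V = 0.
\end{equation*}

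The decisive step is to show that the residual zeroth-order term vanishes. By Lemma~\ref{lem:baseflow2} both $\bar{\rho}$ and $c$ are constant, so the explicit expression for $\hat{A}_0$ gives $\partial_t \hat{A}_0 = 0$. For $\partial_{x_i} \hat{A}_i$, the constant cross-blocks $\tfrac{1}{\bar{\rho}} e_1 e_{i+1}^\top$ and $\tfrac{1}{\bar{\rho}} e_{i+1} e_1^\top$ differentiate to zero, while the diagonal blocks contribute only a factor $\nabla \cdot \bar{u}$, which is zero by $(\ref{eq:ns_asymp_first_order})$. Hence $\partial_t \hat{A}_0 + \partial_{x_i} \hat{A}_i = 0$ and the genuine conservation law $\partial_t \eta(U) + \partial_{x_i} q_i(U) = 0$ follows. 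I expect this cancellation to be the main point of the argument: it is precisely the incompressibility $\nabla \cdot \bar{u} = 0$ of the pathline-averaged base flow that upgrades the balance law into a conservation law, and the constancy of $\bar{\rho}$ and $c$ that removes the time-dependence of $\hat{A}_0$.

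Next I would integrate the conservation law over $\R^d$. Under the stated decay hypothesis on $U$, the divergence theorem forces the flux integral $\int_{\R^d} \partial_{x_i} q_i(U)\,dx$ to vanish, so that $\frac{d}{dt}\int_{\R^d} \eta(U)\,dx = 0$, which is the asserted energy identity.

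Finally, for the acoustic energy identity I would substitute the Helmholtz decomposition $u' = u^a + u^v$ from the Splitting Theorem~\ref{th:splitting} into $\eta(U) = \tfrac12\big( p'^2/c^2 + \bar{\rho}^2\, u'^\top u' \big)$. Expanding $u'^\top u' = (u^a)^\top u^a + 2(u^a)^\top u^v + (u^v)^\top u^v$ and integrating, the cross term drops out: since $u^a = \nabla \varphi$ is irrotational and $u^v$ is solenoidal, integration by parts with the decay at infinity gives $\int_{\R^d} (u^a)^\top u^v\,dx = -\int_{\R^d} \varphi\, \nabla \cdot u^v\,dx = 0$. Rearranging the resulting orthogonal split of $\int_{\R^d}\eta(U)\,dx$ and inserting the already-proved energy conservation yields the acoustic energy identity. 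The only routine care needed is to confirm that the decay assumption legitimately discards the boundary contributions both in the global energy balance and in this orthogonality computation.
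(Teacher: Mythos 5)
Your proposal is correct and follows essentially the same route as the paper: multiply the symmetric form by $V^\top$, use the symmetry of $\hat{A}_0$ and $\hat{A}_i$ to obtain $\partial_t\eta+\partial_{x_i}q_i$ plus the residual term $\tfrac12 V^\top(\partial_t\hat{A}_0+\partial_{x_i}\hat{A}_i)V$, kill that term via Lemma~\ref{lem:baseflow2} and $\nabla\cdot\bar{u}=0$, integrate using the decay of $U$, and conclude with the orthogonality $\int_{\R^d}(u^a)^\top u^v\,dx=-\int_{\R^d}\varphi\,\nabla\cdot u^v\,dx=0$. Your write-up is in fact somewhat more explicit than the paper's about where each base-flow property enters.
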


\begin{proof}
By the symmetry of $(\ref{eq:sym_form})$ it follows that 
\begin{equation*}
\begin{split}
V^\top \hat{L}(V)
&= \partial_t \left( \frac{1}{2}V^\top \hat{A}_0 V\right)
+\partial_{x_i} \left(\frac{1}{2} V^\top \hat{A}_i V \right) 
+\frac{1}{2} V^\top (\partial_t \hat{A}_0
+\partial_{x_i} \hat{A}_i) V
.
\end{split}
\label{eq:VLV}
\end{equation*}
Under the conditions of Lemma \ref{lem:baseflow2} it holds that
\begin{equation*}
\begin{split}
V^\top \hat{L}(V)
&=\partial_t \left( \frac{1}{2}V^\top \hat{A}_0 V\right)
+\partial_{x_i} \left(\frac{1}{2} V^\top \hat{A}_i V \right)\\
&= \partial_t \eta(U) 
+\partial_{x_i} q_i(U)
=0. 
\end{split}
\label{}
\end{equation*}
Now, if $U$ vanishes for large but finite $\|x\|,$ then it follows that
\begin{equation*}
\int_{\R^d} V^\top \hat{L}(V)\, dx 
=\frac{d}{d\tau} \int_{\R^d}  \eta(U(\tau,x)) \, dx
=0.
\label{}
\end{equation*}
Moreover, the orthogonality 
\begin{equation*}
\int_{\R^d} (u^a)^\top u^v \, dx =- \int_{\R^d} \varphi \nabla \cdot u^v \, dx=0
\end{equation*}
finishes the proof.
\end{proof}

Finally, 
as a direct consequence of Theorem \ref{th:energy}, 
the sound intensity can be generalised for highly-unsteady flows. 
By denoting 
\begin{equation*}
\bar{f}^t(x)=\underset{\tau \to \infty}{\lim} \frac{1}{\tau} \int_{0}^{\tau} f(t,x)\,dt
\end{equation*}
the temporal average, we have
\begin{corollary}
Let the assumptions of Theorem \ref{th:splitting} be satisfied. 
Then, the 
fluctuation components of $(\ref{eq:ass})$ admit
an additional conservation law
\begin{align*}
\partial_t E(U) 
+\partial_{x_i} \left(E(U) \bar{u}_i+ p' u_i'\right)
&=0,
\text{ in } (0,\tau) \times \Omega, 
\\
U&=U_0, \text{ in } \{0\} \times \Omega,
\end{align*}
where
\begin{equation*}
\begin{split}
E(U)
&=\frac{1}{2} \left( \frac{p'^2}{\bar{\rho} c^2} + \bar{\rho} u'^\top u' \right).\\
\end{split}
\end{equation*}

Under the assumption that $E$ 
is bounded, the perturbed intensity $I=\overline{E \bar{u}+p' u'}^t$
satisfies 
$\nabla \cdot I=0.$
\end{corollary}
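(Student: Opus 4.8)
The plan is to deduce both assertions from the energy conservation law of Theorem~\ref{th:energy} by a rescaling argument followed by temporal averaging, so that essentially no new computation is required.

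First I would note that, since $\bar{\rho}$ is constant by Lemma~\ref{lem:baseflow2}, the energy density and flux appearing here are exactly those of Theorem~\ref{th:energy} divided by this constant. Indeed, $E(U)=\eta(U)/\bar{\rho}$ and $E(U)\bar{u}_i+p'u_i'=q_i(U)/\bar{\rho}$, as one checks directly from the formulas for $\eta(U)$ and $q_i(U)$. Dividing the identity $\partial_t\eta(U)+\partial_{x_i}q_i(U)=0$ of Theorem~\ref{th:energy} by the constant $\bar{\rho}$ therefore yields the stated conservation law for $E(U)$ at once.

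For the divergence-free property of the intensity I would apply the temporal average $\overline{(\cdot)}^t$ to the conservation law $\partial_t E(U)+\partial_{x_i}\bigl(E(U)\bar{u}_i+p'u_i'\bigr)=0$. By the fundamental theorem of calculus the averaged temporal-derivative term reduces to a boundary contribution which vanishes,
\begin{equation*}
\overline{\partial_t E(U)}^t=\lim_{\tau\to\infty}\frac{E(U(\tau,x))-E(U(0,x))}{\tau}=0,
\end{equation*}
precisely because $E$ is assumed bounded. Interchanging the spatial derivative $\partial_{x_i}$ with the time average in the remaining flux term gives $\partial_{x_i}\overline{E(U)\bar{u}_i+p'u_i'}^t=\nabla\cdot I$, whence $\nabla\cdot I=0$.

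The only step that is not purely algebraic, and hence the main point to justify, is the interchange of the temporal-averaging limit with the spatial derivative in the flux term; under the standing smoothness hypotheses this exchange is admissible, and the boundedness of $E$ both forces the averaged time-derivative to vanish and controls the passage to the limit. I expect this to be the sole mild technical obstacle, all remaining manipulations following mechanically from Theorem~\ref{th:energy} and the constancy of $\bar{\rho}$.
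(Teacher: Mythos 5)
Your proposal is correct and is essentially the argument the paper intends (the paper states this corollary without a written proof, as an immediate consequence of Theorem \ref{th:energy}): since $\bar{\rho}$ is constant by Lemma \ref{lem:baseflow2}, one has $E(U)=\eta(U)/\bar{\rho}$ and $E(U)\bar{u}_i+p'u_i'=q_i(U)/\bar{\rho}$, so the conservation law follows by dividing, and the divergence-free intensity follows by time-averaging with the boundedness of $E$ killing the averaged time derivative. No gaps.
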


\section{Conclusions}
In this paper we have done a mathematical analysis of isentropic vibrations along the fluid particle path. 
It shall be outlined that in the presence of highly-unsteady flows, the paper presented the first 
decomposition of flow variables in acoustic and non-acoustic quantities,
where the acoustic quantities satisfy a closed equation (see $(\ref{eq:conv_wave2})$).
This result allowed us to give a general definition of sound and to identify the sources of sound, i.e.
\begin{equation*}
\frac{1}{c^2} \left(\frac{\bar{D}}{Dt} \right)^2 p' - \Delta p'= \nabla \cdot \nabla \cdot \left( \bar{\rho} \left(u-\bar{u}\right) \otimes \left(u-\bar{u}\right) \right).
\end{equation*}
The acoustic sources are based on a compressible velocity field $u$ and the pathline-averaged base flow $\bar{u},$ which can be obtained by the transport equations  $(\ref{eq:path_av_integral})$ and $(\ref{eq:base_flow_backward}).$
However, for low Mach number flows, sources based on an incompressible velocity field $u$ are most appropriate.

We also derived a conservation law 
for the perturbed energy, which is capable to generalise  
the sound intensity.

Moreover, the acoustic fluctuation under consideration shedded some new light on the understanding of Lighthill's wave equation.

\end{document}